\newtheorem{lemma}{Lemma}
\newtheorem{proposition}{Proposition}
\begin{document}
\title{Robust Beamforming Design for IRS-Aided URLLC in D2D Networks}
\author{Jing Cheng,~\IEEEmembership{Graduate Student Member,~IEEE}, Chao Shen,~\IEEEmembership{Member,~IEEE}, Zheng Chen,~\IEEEmembership{Member,~IEEE}, Nikolaos Pappas,~\IEEEmembership{Senior Member,~IEEE}
\thanks{Manuscript received January 18, 2022; revised May 6, 2022 and June 4, 2022; accepted July 5, 2022. Date of publication xxx xx, 2022; date of current version xxx xx, 2022. This work was supported in part by the National Key R\&D Program of China under Grant 2021YFB2900301; in part by the State Key Laboratory of Rail Traffic Control and Safety, Beijing Jiaotong University under Contract RCS2021ZP002; in part by the NSFC, China under Grants 61871027, 62031008, and U1834210; and in part by the China Scholarship Council (CSC) under Grant 202007090174. The work of N. Pappas was supported by the Swedish Research Council (VR), ELLIIT, and CENIIT. A part of this work has been presented at the 25th International ITG Workshop on Smart Antennas (WSA), 2021 \cite{Cheng2021}. The editor coordinating the review of this article and approving it for publication was Behrooz Makki. (Corresponding author: Chao Shen.)}
\thanks{Jing Cheng is with the State Key Laboratory of Rail Traffic Control and Safety, Beijing Jiaotong University, Beijing 100044, China (e-mail: chengjing@bjtu.edu.cn).}
\thanks{Chao Shen is with the State Key Laboratory of Rail Traffic Control and Safety, Beijing Jiaotong University, Beijing 100044, China, and also with the Shenzhen Research Institute of Big Data, Shenzhen, China (email: chaoshen@sribd.cn).}
\thanks{Zheng Chen is with the Department of Electrical Engineering, Link\"oping University,  58183 Link\"oping, Sweden (e-mail: zheng.chen@liu.se).}
\thanks{Nikolaos Pappas is with the Department of Computer and Information Science, Link\"oping University, 58183 Link\"oping, Sweden (e-mail: nikolaos.pappas@liu.se).}
}

%
\maketitle
\vspace{-3mm}
\begin{abstract}
Intelligent reflecting surface (IRS) and device-to-device (D2D) communication are two promising technologies for improving transmission reliability between transceivers in communication systems. In this paper, we consider the design of reliable communication between the access point (AP) and actuators for a downlink multiuser multiple-input single-output (MISO) system in the industrial IoT (IIoT) scenario. We propose a two-stage protocol combining IRS with D2D communication so that all actuators can successfully receive the message from AP within a given delay. The superiority of the protocol is that the communication reliability between AP and actuators is doubly augmented by the IRS-aided first-stage transmission and the second-stage D2D transmission. A joint optimization problem of active and passive beamforming is formulated, which aims to maximize the number of actuators with successful decoding. We study the joint beamforming problem for cases where the channel state information (CSI) is perfect and imperfect. For each case, we develop efficient algorithms that include convergence and complexity analysis. Simulation results demonstrate the necessity and role of IRS with a well-optimized reflection matrix, and the D2D network in promoting reliable communication. Moreover, the proposed protocol can enable reliable communication even in the presence of stringent latency requirements and CSI estimation errors.
\end{abstract}

\begin{IEEEkeywords}
D2D communication, industrial Internet of things (IIoT), intelligent reflecting surface (IRS), robust beamforming, URLLC
\end{IEEEkeywords}
\IEEEpeerreviewmaketitle
\vspace{-3mm}
\section{Introduction}
With the development of the Internet of things (IoT) and the fifth-generation (5G) and beyond wireless networks, the communication paradigm shifts from human-to-human (H2H) communication to machine-to-machine (M2M) communication. One of the IoT use cases is critical industrial IoT (IIoT), which is envisioned to support mission-critical applications such as intelligent transportation systems, remote healthcare and smart manufacturing \cite{SuttonICST2019}. The implementation of critical IIoT requires establishing ultra-reliable and low-latency communication (URLLC) among IIoT devices. To meet the demanding low-latency with the order of milliseconds required for critical IIoT applications, short packets are usually transmitted. However, this will inevitably cause a loss of coding gain. That is, low latency is achieved at the expense of reliability. To enable ultra-reliable communication between the transmitter and the receiver, a retransmission mechanism \cite{KotabaIToWC2021} utilizing temporal diversity is proposed. However, URLLC packets scheduled in mini-slots are subject to quasi-static fading channels, which can degrade the retransmission performance. Moreover, temporal diversity cannot guarantee the reliability requirements of URLLC packet transmission if the channel exhibits deep fading over a long period of time. Therefore, it is crucial to investigate new technologies or develop new communication protocols to ensure ultra-reliable communication between IIoT devices within a certain millisecond delay.
\vspace{-3mm}
\subsection{Related Work}
Intelligent reflecting surface (IRS), a metasurface equipped with massive reflecting elements \cite{SurPC2021}, is a promising technology for enabling URLLC \cite{Cheng2021}. It is able to reconfigure the wireless environment and turn the random wireless channels into partially deterministic ones by beamforming design \cite{ZhaiIIoTJ2021,GuoIToC2021,CaiIToWC2022}. As a result, the received signal-to-noise ratio (SNR) can be significantly improved. Even though the direct link between transceivers is hindered, IRS can create a virtual line-of-sight (LoS) link to bypass obstacles between transceivers via smart reflection \cite{GongICST2020}. Thus, the integration of IRS into the communication system helps to enhance reliability, reduce packet retransmission, and minimize the delay. Consequently, IRS can be a potential and cost-effective solution to realize URLLC. In \cite{HashemiIToVT2021}, the authors presented the performance analysis of the average achievable rate and error probability over an IRS-aided URLLC transmission with/without phase noise. Considering non-linear energy harvesting, the end-to-end performance of the IRS-assisted wireless system was analyzed in \cite{DhokIToC2021} for industrial URLLC applications, and the approximate closed-form expression of block error rate was derived. Authors in \cite{XieIWCL2021} studied an IRS-assisted downlink multiuser URLLC system and jointly optimized the user grouping and the blocklength allocation at the base station (BS), as well as the reflective beamforming at the IRS for latency minimization.

Device-to-device (D2D) communication is another potential technology to achieve URLLC. In most mission-critical applications, devices (e.g. sensors, machines, actuators, robots) are in close proximity to each other. Thus, the channel between the devices is much more reliable than that between the access point (AP) and the device, thereby rendering the D2D network promising to reduce resource consumption, lower communication latency, and improve reliability \cite{JameelICST2018,AnsariISJ2018}. Recently, there are studies on the design of D2D-based URLLC systems. A probability-based D2D activation and power allocation scheme was proposed in \cite{ChangIToC2021} to deal with the extremely high quality-of-service (QoS) requirements in URLLC for real-time wireless control systems, where each sensor autonomously decides whether to participate in the control process without interactive communications. The authors in \cite{LiuITWC2018} developed a D2D-based two-phase transmission protocol for URLLC, where each group's messages are combined and the BS multicasts them to the leaders in groups in the first phase, while leaders help to relay messages to other users in their groups in the second phase. In \cite{WuIToVT2020}, authors investigated the contention-based radio resource management for URLLC-oriented D2D communications.

Some research works further combined the technologies of IRS and D2D communication and studied the IRS-assisted D2D network design. For instance, the authors in \cite{VanIToWC2022} considered deploying IRS in the integrated data and energy network coexisting with D2D communication to maximize the minimum throughput of the information-demanded users. In \cite{WangIToWC2021}, a resource allocation design for the IRS-aided joint processing coordinated multipoint (JP-CoMP) system with underlaying D2D network was investigated. The authors in \cite{PengICL2021} studied an IRS-aided D2D communication system over Rician fading channels with the  consideration of practical hardware impairments at both the terminals and IRSs. However, the optimization of most of these works is based on Shannon capacity with assumptions of infinite blocklength and zero error probability. If we directly apply the results and conclusion of these works to URLLC-oriented applications, we may get the underestimated delay performance and overestimated reliability performance \cite{XuIToWC2020}. This necessitates the IRS-assisted D2D network design for URLLC under the finite blocklength (FBL) regime. Towards this end, one practical factor needed to be considered is the channel state information (CSI) estimation. On the one hand, IRS can only passively reflect signals and is not able to transmit or receive pilot symbols. On the other hand, the transmission time interval (TTI) of URLLC systems is very short, so that the time for channel training is highly restricted. Consequently, the perfect knowledge of CSI may not be available in practice. This entails a robust IRS-assisted D2D network design for URLLC under the imperfect CSI scenario.
\vspace{-4mm}
\subsection{Contributions}
In this paper, we propose a two-stage protocol to enable reliable communication between the AP and actuators in the IIoT scenario assisted by IRS and D2D networks under the scenarios of perfect and imperfect CSI. This is achieved by jointly optimizing the active beamforming at the AP and the reflective beamforming at the IRS to maximize the number of actuators with successful decoding. The main contributions of this paper are summarized as follows.
\begin{itemize}
  \item A communication protocol for dual augmented reliability by combining IRS and D2D networks in a specified latency requirement is proposed. The reliable communication design is investigated under the scenarios of perfect and imperfect CSI. In this way, URLLC in IIoT scenario can be enabled.
  \item With perfect CSI, we propose two efficient algorithms with guaranteed convergence and polynomial time complexity. One is the AltMin algorithm, dealing with unit-modulus constraints by relaxation first and then projection into a feasible region. The other one is a penalty-based successive convex approximation (SCA) algorithm which decomposes the product of active and passive beamformers and avoids using the alternating optimization method. For the imperfect CSI case, a semidefinite relaxation (SDR)-based block coordinate descent (BCD) algorithm is proposed for robust design. The proposed algorithms show better performance than other baseline schemes.
  \item The advantages of IRS with well-optimized phase shifts and the D2D network to enhance reliable communication are verified by simulations compared to other baseline schemes. Due to the doubly improved reliability from the combined usage of IRS and D2D network, as well as the multiuser diversity, the proposed two-stage protocol can ensure reliable communication between AP and actuators even under stringent delay requirements and it is shown to be robust to the uncertainties of CSI.
\end{itemize}

The rest of the paper is organized as follows. In Section \ref{sysmodelsec}, we present the system model, the two-stage communication protocol, and the problem formulation. The reliable communication design under the perfect and imperfect CSI scenarios are given in Section \ref{speicalcase} and Section \ref{worstcasesec}, respectively. Simulation results are presented in Section \ref{simuresults} and Section \ref{conclusionsec} concludes the paper.

\textit{Notations:}~$\mathbf{I}_n$ refers to an $n\times n$ identity matrix. $\mathbf{x}_i,\mathbf{X}_{ij}$ stand for the $i$-th element of a vector $\mathbf{x}$ and the $(i,j)$-th element of a matrix $\mathbf{X}$, respectively. $\mathrm{diag}(\mathbf{x})$ denotes a diagonal matrix whose diagonal elements are extracted from a vector $\mathbf{x}$.
\section{System Model}\label{sysmodelsec}
We consider a downlink multiple-input single-output (MISO)-URLLC system between AP and $K$ actuators in the IIoT scenario, where AP is equipped with $N_t$ antennas and all actuators indexed by $k=\{1,\cdots,K\}$ are equipped with a single antenna. As shown in Fig. \ref{systemModel}, inspired by a D2D-based two-phase transmission protocol \cite{LiuITWC2018}, we leverage an IRS with $M$ reflecting elements and the D2D network to doubly enhance the transmission reliability. Thus, all actuators can successfully receive critical messages in the form of short packets from AP within a delay of $\tau$ seconds. The symbols used throughout the paper and their definitions are listed in Table \ref{symbolDef}.
\begin{figure}[htb]
\centering
\includegraphics[width=0.95\linewidth]{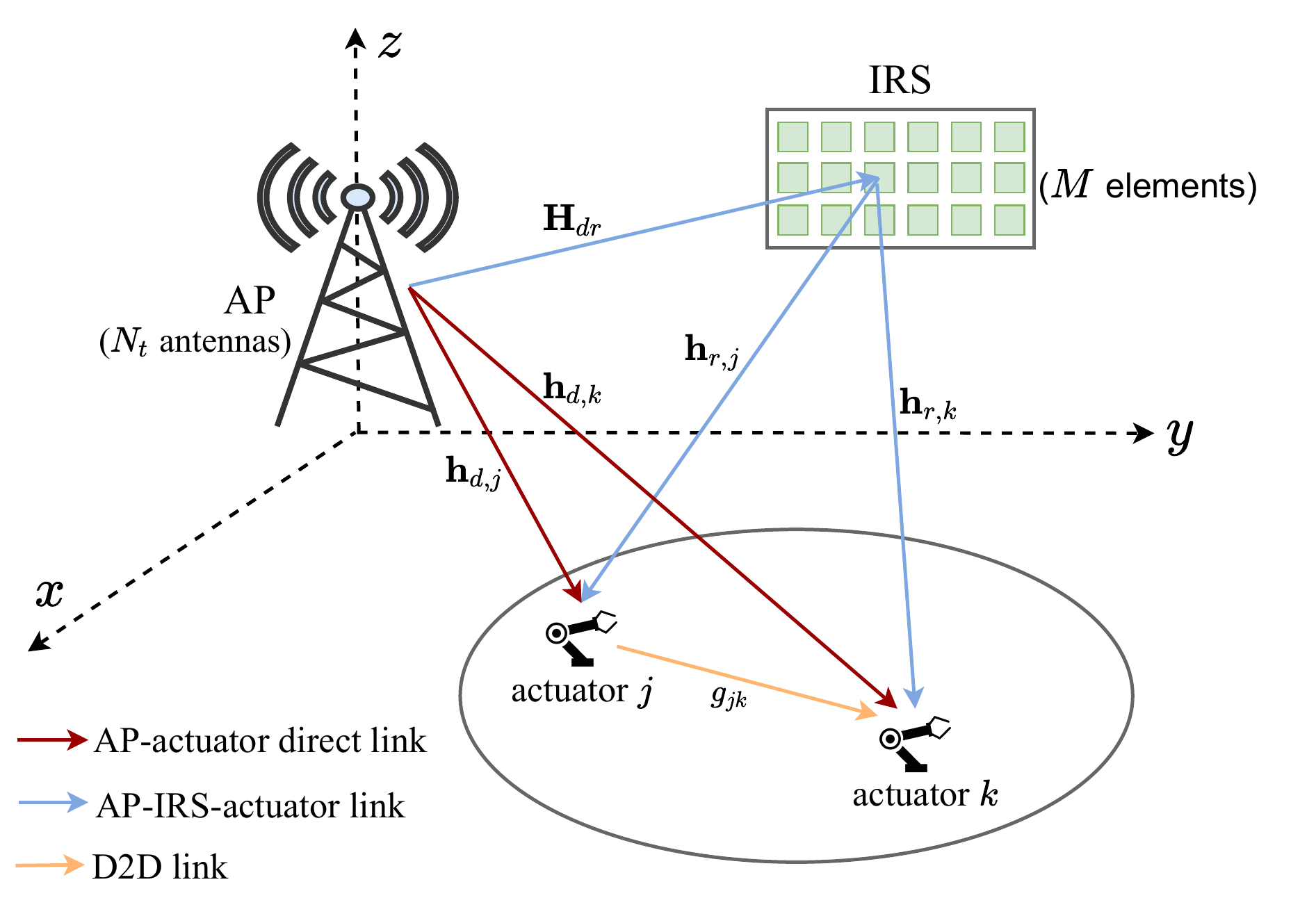}
	\caption{A downlink MISO-URLLC system between AP and multiple actuators where transmission reliability is doubly enhanced by IRS and D2D network.}
	\label{systemModel}
\end{figure}

\begin{table}[htb]
\centering
\caption{List of symbols.}
\label{symbolDef}
\begin{tabular}{cc}
\toprule
Symbol & Definition \\
\midrule
$K$ & Number of actuators\\
$N_t$ &Number of transmit antennas at AP  \\
$M$ & Number of reflecting elements at IRS\\
$\mathbf{h}_{d,k}$ & Direct channel from AP to actuator $k$ \\
$\mathbf{G}_k$ & Cascaded AP-IRS-actuator channel \\
$\mathbf{R}_k$ & Effective channel from AP to IRS \\
$\mathbf{w}$ & Active beamforming vector at AP \\
$\widetilde{\mathbf{v}}$ & Phase shift vector at IRS \\
$\mathbf{W}$ & Active beamforming matrix at AP\\
$\mathbf{V}$ & Reflective beamforming matrix at IRS \\
\bottomrule
\end{tabular}
\end{table}

The channels from the AP to the actuator $k$, from the AP to the IRS, and from the IRS to the actuator $k$ are denoted by $\mathbf{h}_{d,k}\in\mathbb{C}^{N_t},\mathbf{H}_{dr}\in\mathbb{C}^{M\times N_t},\mathbf{h}_{r,k}\in\mathbb{C}^{M}$, respectively. We denote the channel from the actuator $j$ to actuator $k$ by $g_{jk}\in\mathbb{C}$. The coherence time and bandwidth of all channels are assumed to be the same for tractability, denoted as $T$ seconds and $B$ Hz, respectively. Given that the positions of all actuators are fixed and the target latency requirement $\tau$ is much smaller than the channel coherence time $T$, we adopt a quasi-static block-fading channel model, i.e., the channel gains remain constant within one coherence block and vary independently across blocks.

The transmission from AP to the actuators follows a two-stage IRS-aided D2D communication protocol, as shown in Fig. \ref{two-stageprotocol}. The description of each stage is given below.
\begin{figure}[t]
    \setlength{\belowcaptionskip}{-5mm}
	\centering
	\includegraphics[width=\linewidth]{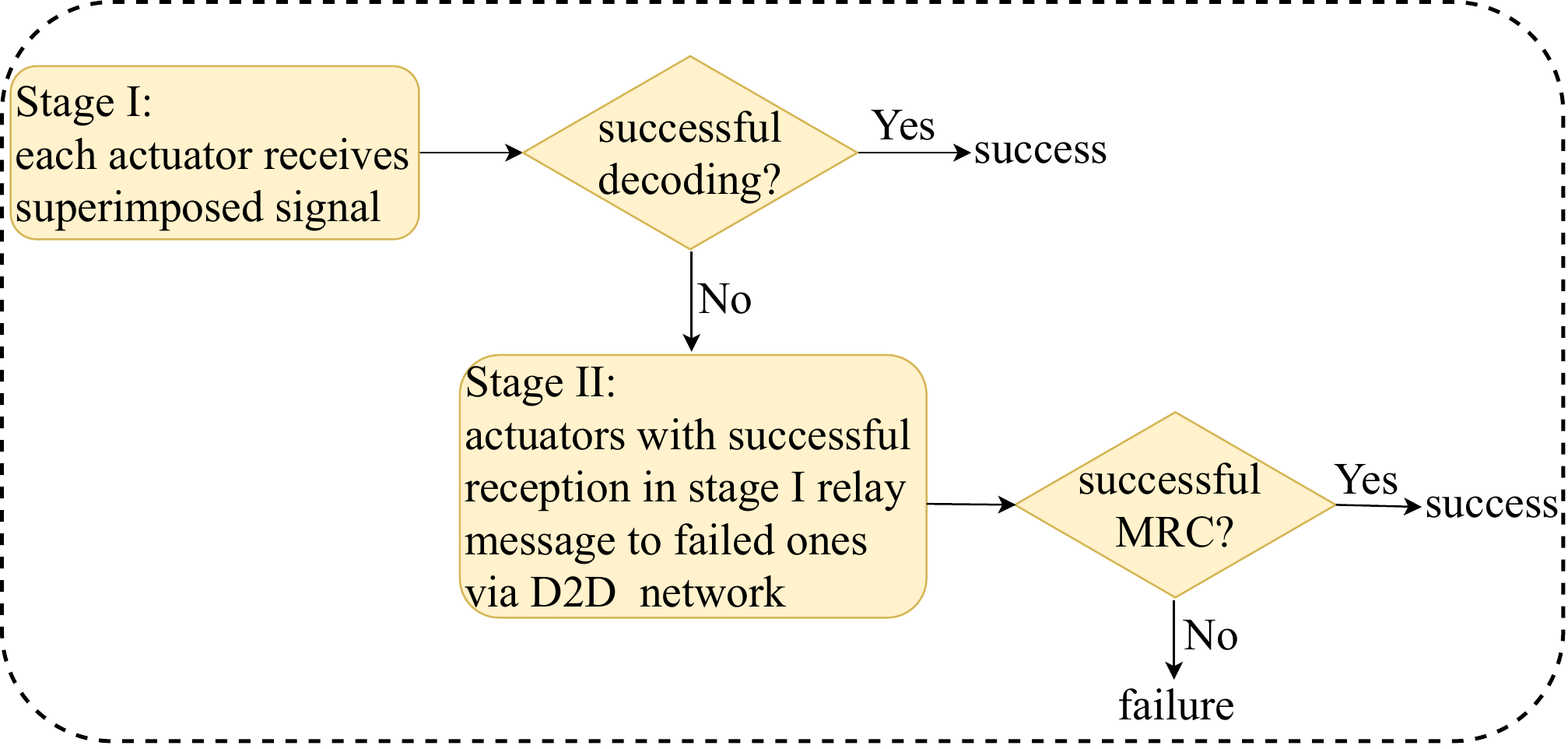}
	\caption{The workflow of the proposed two-stage protocol, where Stage I and Stage II are described in Section \ref{RISstageI} and Section \ref{D2DstageII}.}
	\label{two-stageprotocol}
\end{figure}
\vspace{-3mm}
\subsection{IRS-aided first-stage transmission}\label{RISstageI}
In the first stage with a duration of $\tau_1$ seconds, the transmitted signal at the AP is $\mathbf{x}^{(\text{I})}=\mathbf{w} s$, where $\mathbf{w}\in\mathbb{C}^{N_t}$ is the beamforming vector for broadcasting at the AP, and $s$ denotes the combined symbol with unit power intended for all actuators. The received signal at the actuator $k$ consists of a direct signal from AP and a reflected signal from IRS, which can be given by $y_k^{(\text{I})}=\left(\mathbf{h}_{d,k}^H+\mathbf{h}_{r,k}^H\pmb{\Phi}\mathbf{H}_{dr}\right)\mathbf{x}^{(\text{I})}
  +z_k^{(\text{I})}$, where $\pmb{\Phi}=\mathrm{diag}(\phi_1,\cdots,\phi_M)$ is a reflection matrix of IRS with $\phi_m=e^{j\theta_m},\forall m$ on it diagonal, $\theta_m\in[0,2\pi]$ is the phase shift of the $m$-th reflecting element, and $z_k^{(\text{I})}\sim\mathcal{CN}(0,\sigma_k^2)$ is the additive white Gaussian noise (AWGN).

Denote the cascaded AP-IRS-actuator channel and the vector containing diagonal elements of matrix $\pmb{\Phi}$ by $\mathbf{G}_k=\mathrm{diag}\left(\mathbf{h}_{r,k}^H\right)\mathbf{H}_{dr}$ and  $\widetilde{\mathbf{v}}=[\phi_1,\cdots,\phi_M]^H\in\mathbb{C}^{M}$, respectively. The SNR of the $k$-th actuator in the first stage can be represented as
\begin{equation}\label{gammaI}
\setlength\abovedisplayskip{2pt}
\setlength\belowdisplayskip{2pt}
  \gamma_k^{(\text{I})}=\frac{\begin{vmatrix}\left(\mathbf{h}_{d,k}^H+\widetilde{\mathbf{v}}^H\mathbf{G}_k\right)
  \mathbf{w}\end{vmatrix}^2}{\sigma_k^2}.
\end{equation}

Based on the normal approximation of the maximum achievable rate with finite blocklength codes \cite{YangIToIT2014}, we characterize the maximum achievable rate of the actuator $k$ over $L_1=\tau_1B$ channel uses in the first stage by
\begin{equation}
\setlength\abovedisplayskip{2pt}
\setlength\belowdisplayskip{2pt}
  \frac{D}{L_1}=\log_2\left(1+\gamma_k^{(\text{I})}\right)
  -\sqrt{\frac{V_k^{(\text{I})}}{L_1}}
  Q^{-1}\left(\varepsilon_k^{(\text{I})}\right),\label{FBL_stageI}
\end{equation}
where $D$ is the number of data bits, $V_k^{(\text{I})}=(\log_2 e)^2\left(1-\left(1+\gamma_k^{(\text{I})}\right)^{-2}\right)$ is the channel dispersion, $\varepsilon_k^{(\text{I})}$ is the decoding error probability, $Q^{-1}(\cdot)$ is the inverse of Q-function $Q(x)=\frac{1}{\sqrt{2 \pi}}\int_{x}^{\infty} e^{-t^{2}/2}dt$. Note that equation \eqref{FBL_stageI} characterizes the relationship between the achievable rate, transmission time (channel uses) and decoding error probability. From \eqref{FBL_stageI}, the decoding error probability can be represented as
\begin{equation}
\setlength\abovedisplayskip{2pt}
\setlength\belowdisplayskip{2pt}
  \varepsilon_k^{(\text{I})}\!=\!Q\left(\frac{\sqrt{L_1}\log_2\left(1
  \!+\!\gamma_k^{(\text{I})}\right)\!-\!D/\sqrt{L_1}}
  {\sqrt{V_k^{(\text{I})}}}\right)\!\triangleq \! Q\left(f_1\left(\gamma_k^{(\text{I})}\right)\right).\label{epsilon1}\\
\end{equation}

Note that $f_1\left(\gamma_k^{(\text{I})}\right)$ is a monotonically increasing function of $\gamma_k^{(\text{I})}$ \cite[Lemma 1]{MalakIToWC2019} and Q-function is a monotonically decreasing function. So it can be obtained that $\varepsilon_k^{(\text{I})}$ decreases with $\gamma_k^{(\text{I})}$. Denote the maximum allowed packet error probability (PEP) by $\varepsilon_{\text{th}}$, then we can have the minimum SNR threshold $\gamma_{\text{th}}^{(\text{I})}$ required for successful decoding in the first stage corresponding to the maximum PEP $\varepsilon_{\text{th}}$.

If the SNR of actuator $k$ can reach the threshold $\gamma_{\text{th}}^{(\text{I})}$, we regard that the $k$-th actuator can successfully decode the message in the first stage, otherwise, the packet reception of actuator $k$ ends with failure in the first stage. Thus, for any actuator $k$, we define an indicator $a_k^{(\text{I})}$ as follows
\begin{equation}\label{indicatorI}
\setlength\abovedisplayskip{2pt}
\setlength\belowdisplayskip{2pt}
    a_k^{(\text{I})}=\left\{\begin{array}{ll}
    1,& \mathrm{if}~ \gamma_k^{(\text{I})}\geq \gamma_{\text{th}}^{(\text{I})},\\
    0,& \mathrm{otherwise}.
    \end{array}\right.
\end{equation}
\vspace{-3mm}
\subsection{Second-stage D2D transmission}\label{D2DstageII}
In the second stage with a duration of $\tau_2=\tau-\tau_1$ seconds, the actuators with successful decoding in the first stage relay messages over a D2D network to the rest of actuators. The transmit signal of the actuator $j$ in the second stage is $x_j^{(\text{II})}=a_j^{(\text{I})}\sqrt{P}s$, where $P$ is the common transmit power for all actuators. Here, we don't consider power control due to the following two factors: 1) the global CSI of the D2D network is not available at the AP; 2) the actuators that act as relays transmit the signal at the maximum power, which can improve the received SNR and is more aligned with the requirements of the URLLC scenario.

If actuator $k$ cannot decode the message successfully in the first stage, it will receive relayed signals in the second stage from actuators with successful reception. Thus, the received signal of actuator $k$ in the second stage can be expressed as $y_k^{(\text{II})}=\sum\limits_{j\neq k}g_{jk}x_j^{(\text{II})}+z_k^{(\text{II})}$, where $z_k^{(\text{II})}\sim\mathcal{CN}(0,\sigma_k^2)$ is the AWGN. At the end of this stage, the actuator $k$ jointly decodes the signals received in two stages via maximum-ratio combining (MRC). The received SNR of actuator $k$ in the second stage is given by
\begin{equation}\label{gammaII}
\setlength\abovedisplayskip{2pt}
\setlength\belowdisplayskip{2pt}
  \gamma_k^{(\text{II})}=\gamma_k^{(\text{I})}+\frac{P
  \begin{vmatrix} \sum\limits_{j\neq k}a_j^{(\text{I})}g_{jk}\end{vmatrix}^2}{\sigma_k^2}.
\end{equation}
Then, the maximum achievable rate of actuator $k$ over $L_2=\tau_2B$ channel uses in the second stage can be characterized by
\begin{equation}
\setlength\abovedisplayskip{2pt}
\setlength\belowdisplayskip{2pt}
  \frac{D}{L_2}=\log_2\left(1+\gamma_k^{(\text{II})}\right)
  -\sqrt{\frac{V_k^{(\text{II})}}{L_2}}
  Q^{-1}\left(\varepsilon_k^{(\text{II})}\right),\label{FBL_stageII}
\end{equation}
where $\varepsilon_k^{(\text{II})}$ is the decoding error probability. From equation \eqref{FBL_stageII}, $\varepsilon_k^{(\text{II})}$ can be expressed as
\begin{equation}
\setlength\abovedisplayskip{4pt}
\setlength\belowdisplayskip{4pt}
  \!\varepsilon_k^{(\text{II})}\!=\!Q\left(\!\frac{\sqrt{L_2}\log_2\left(1
  \!+\!\gamma_k^{(\text{II})}\right)\!-\!D/\sqrt{L_2}}
  {\sqrt{V_k^{(\text{II})}}}\right)\!\triangleq \! Q\left(f_2\left(\gamma_k^{(\text{II})}\right)\right).\label{epsilon2}\\
\end{equation}
Similar to the conclusion derived from \eqref{epsilon1}, we know that $\varepsilon_k^{(\text{II})}$ decreases with $\gamma_k^{(\text{II})}$. Then, we can obtain the minimum SNR threshold $\gamma_{\text{th}}^{(\text{II})}$ for successful decoding in the second stage corresponding to the maximum PEP $\varepsilon_{\text{th}}$. To indicate whether actuator $k$ failed in the first stage can successfully decode the message in the second stage, we denote an indicator by
\begin{equation}\label{indicatorII}
\setlength\abovedisplayskip{2pt}
\setlength\belowdisplayskip{2pt}
    a_k^{(\text{II})}=\left\{\begin{array}{ll}
    1,& \mathrm{if}~ a_k^{(\text{I})}=0,\gamma_k^{(\text{II})}\geq \gamma_{\text{th}}^{(\text{II})},\\
    0,& \mathrm{if}~ a_k^{(\text{I})}=0,\gamma_k^{(\text{II})}< \gamma_{\text{th}}^{(\text{II})}.
    \end{array}\right.
\end{equation}
\vspace{-4mm}
\subsection{Problem Formulation}
In this paper, we aim to develop a reliable communication design between an AP and multiple actuators within a given delay requirement. Based on the two-stage communication protocol, our goal is to maximize the number of actuators with successful decoding by jointly designing the active beamforming at the AP and the reflection matrix at the IRS, i.e.,
\begin{subequations}
\begin{align}
\mathrm{P1}:\max_{\mathbf{w},\widetilde{\mathbf{v}}} \quad&\sum_{k=1}^K \left(a_k^{(\text{I})}+a_k^{(\text{II})}\right)\\
\mathrm{s.t.}\quad &\|\mathbf{w}\|^2\leq P_{\text{max}},\label{P1power}\\
&|\widetilde{\mathbf{v}}_m|=1,\forall m,\label{P1unitmodulus}
\end{align}
\end{subequations}
where constraints \eqref{P1power} and \eqref{P1unitmodulus} denote the maximum transmit power constraint at AP and the unit-modulus constraint of each reflecting element.

We assume that at least one actuator can successfully decode the message in the first stage. Note that the higher the number of actuators that can successfully decode the signal in the first stage (i.e., the higher the number of actuators that act as relays to relay messages in the second stage), the higher the probability of successful reception that the remaining actuators have via D2D network in the second stage \cite{LiuITWC2018}. This is mainly due to the multiuser diversity and the strong D2D link between actuators in close proximity. Motivated by this, we transform problem $\mathrm{P1}$ into an alternative form that aims to maximize the number of actuators with successful decoding in the first stage. As a result, fewer remaining actuators are required to reach the SNR threshold $\gamma_{\text{th}}^{(\text{II})}$ for reliable reception in the second stage. Therefore, problem $\mathrm{P1}$ can be reformulated as
\vspace{-1mm}
\begin{subequations}
\setlength\abovedisplayskip{4pt}
\setlength\belowdisplayskip{4pt}
\begin{align}
\mathrm{P2}:\max_{\mathbf{w},\widetilde{\mathbf{v}}} \quad&\sum_{k=1}^K a_k^{(\text{I})}\\
\mathrm{s.t.}\quad~ &\sum_{k=1}^K a_k^{(\text{I})}\geq 1,\forall k,\label{P2C1}\\
&\|\mathbf{w}\|^2\leq P_{\text{max}},\\
&|\widetilde{\mathbf{v}}_m|=1,\forall m. \label{P2C3}
\end{align}
\end{subequations}

We can obtain the active beamformer $\mathbf{w}$ at the AP, phase shifts $\widetilde{\mathbf{v}}$ and indicator $a_k^{(\text{I})},\forall k$ by solving problem $\mathrm{P2}$. With those solutions, we can calculate $\gamma_k^{(\text{I})}$ and $\gamma_k^{(\text{II})}$ based on equations \eqref{gammaI} and \eqref{gammaII}. Then, we can obtain $a_k^{(\text{II})}$ based on equation \eqref{indicatorII}. It is necessary to remark that if $\sum_{k=1}^K \begin{pmatrix}a_k^{(\text{I})}+a_k^{(\text{II})}\end{pmatrix}=K$, then all actuators can successfully receive the critical signals from AP after the two-stage transmission.
\vspace{-3mm}
\section{Active and Reflective Beamforming with Perfect CSI}\label{speicalcase}
In this section, to achieve a reliable communication design with perfect CSI, we propose two effective algorithms with guaranteed convergence, namely the AltMin algorithm and the penalty-based SCA algorithm.
\vspace{-4mm}
\subsection{Problem Transformation}
In this case, the CSI of direct AP-actuator channels $\mathbf{h}_{d,k},\forall k$ and cascaded AP-IRS-actuator channels $\mathbf{G}_k,\forall k$ are assumed to be perfectly known at the AP, which can be obtained by channel estimation methods in \cite{HuIToC2021a}.

In problem $\mathrm{P2}$, $a_k^{(\text{I})}$ given by equation \eqref{indicatorI} is a discrete function over beamformers $\mathbf{w}$ and $\widetilde{\mathbf{v}}$. This discrete nature complicates the solution and makes it difficult to apply the standard optimization techniques. In view of this, we introduce a set of auxiliary variables $\mathbf{q}=[q_1,\cdots,q_k]^T\succeq\mathbf{0}$ to characterize the SNR gap between the achievable SNR $\gamma_k^{(\text{I})}$ and the SNR threshold $\gamma_{\text{th}}^{(\text{I})}$. In particular, $q_k=0$ means the SNR condition $\gamma_k^{(\text{I})}\geq \gamma_{\text{th}}^{(\text{I})}$ is satisfied and $a_k^{(\text{I})}=1$. Otherwise, $a_k^{(\text{I})}=0$ for nonzero $q_k$. That is, the number of zero elements in vector $\mathbf{q}$ denotes the number of actuators with successful decoding. Therefore, maximizing $\sum_{k=1}^K a_k^{(\text{I})}$ (the number of actuators with successful decoding in the first stage) is equivalent to minimizing $\|\mathbf{q}\|_0$ (the number of nonzero elements in vector $\mathbf{q}$). In this way, we transform $\mathrm{P2}$ into an equivalent but a more tractable and continuous form, which can be formulated as
\begin{subequations}
\setlength\abovedisplayskip{2pt}
\setlength\belowdisplayskip{2pt}
\begin{align}
\mathrm{P3}:\min_{\mathbf{q,w},\widetilde{\mathbf{v}}} \quad&\|\mathbf{q}\|_0\\
\mathrm{s.t.}~~~~ &\gamma_k^{(\text{I})}+q_k\geq \gamma_{\text{th}}^{(\text{I})},\forall k,\label{P3C1}\\
&q_k\geq 0,\forall k,\label{P3C2}\\
&\|\mathbf{q}\|_0\leq K-1,\label{P3C3}\\
&\|\mathbf{w}\|^2\leq P_{\text{max}},\label{powerConstraint}\\
&|\widetilde{\mathbf{v}}_m|=1,\forall m,\label{unit-modulus}
\end{align}
\end{subequations}
where \eqref{P3C3}, equivalent to \eqref{P2C1}, guarantees that at least one actuator can successfully decode the message  in the first stage.

The challenges in solving $\mathrm{P3}$ mainly include the $\ell_0$-norm in the objective function and constraint \eqref{P3C3}, the strictly nonconvex constraint \eqref{P3C1} due to the coupling of $\mathbf{w}$ and $\widetilde{\mathbf{v}}$ in the form of a product, and the unit-modulus constraint \eqref{unit-modulus}. To overcome these challenges, we propose two effective algorithms below.
\vspace{-3mm}
\subsection{Proposed AltMin Algorithm}\label{AltMinSection}
In this algorithm, we apply an alternating optimization technique to solve this minimization problem. Hence, we call this algorithm AltMin. Specifically, we alternately optimize $\mathbf{q,w}$ with given $\widetilde{\mathbf{v}}$ and $\mathbf{q},\widetilde{\mathbf{v}}$ with given $\mathbf{w}$. In the following, we elaborate the design of the AltMin algorithm.

With given $\widetilde{\mathbf{v}}$, let $\mathbf{b}_k=\mathbf{h}_{d,k}+\mathbf{G}_k^H\widetilde{\mathbf{v}}$, then constraint \eqref{P3C1} can be recast as
\begin{equation}
\setlength\abovedisplayskip{2pt}
\setlength\belowdisplayskip{2pt}
  |\mathbf{b}_k^H\mathbf{w}|^2\geq \sigma_k^2\left(\gamma_{\text{th}}^{(\text{I})}-q_k\right),\forall k.
\end{equation}
Further, based on the first-order Taylor approximation, $|\mathbf{b}_k^H\mathbf{w}|^2$ is lower bounded by
\begin{equation}
\setlength\abovedisplayskip{2pt}
\setlength\belowdisplayskip{2pt}
  \mathcal{F}_k(\mathbf{w})=2\mathrm{Re}\begin{Bmatrix}\left(\mathbf{w}^{(i)}\right)^H\mathbf{b}_k\mathbf{b}_k^H
  \mathbf{w}\end{Bmatrix}-|\mathbf{b}_k^H\mathbf{w}^{(i)}|^2,
\end{equation}
where $\mathbf{w}^{(i)}$ is the solution obtained in the $i$-th iteration. In this way, \eqref{P3C1} can be approximated by a convex constraint, i.e.,
\begin{equation}
\setlength\abovedisplayskip{2pt}
\setlength\belowdisplayskip{2pt}
  \mathcal{F}_k(\mathbf{w})\geq \sigma_k^2\left(\gamma_{\text{th}}^{(\text{I})}-q_k\right),\forall k.
\end{equation}

Next, we apply the reweighted $\ell_1$ technique \cite{CandesJoFAA2007} to deal with the $\ell_0$-norm. The principle is to use the weighted $\ell_1$-norm to approximate the $\ell_0$-norm and then update the weights. To be more specific, $\|\mathbf{q}\|_0$ can be approximated by $\sum_{k=1}^K \omega_kq_k$, where $\omega_k,\forall k$ are positive weights. Here, weights are updated by $\omega_k^{(i+1)}=1/\left(q_k^{(i)}+\nu\right)$, where $q_k^{(i)}$ is the solution obtained in the $i$-th iteration and $\nu$ is a small number. Note that the weights can be interpreted as penalties, i.e., large weights encourage zero entry $q_k$ while small weights discourage nonzero $q_k$.

Based on the above transformations, the optimization problem of $\mathbf{q,w}$ with given $\widetilde{\mathbf{v}}$ can be given by
\vspace{-1mm}
\begin{subequations}
\setlength\abovedisplayskip{2pt}
\setlength\belowdisplayskip{2pt}
\begin{align}
\!\!\!\!\mathrm{AltMin\!-\!P4\!-\!1}:\min_{\mathbf{q,w}} \quad&\sum_{k=1}^K \omega_kq_k\\
\mathrm{s.t.}~~~ &\mathcal{F}_k(\mathbf{w})\geq \sigma_k^2\left(\gamma_{\text{th}}^{(\text{I})}-q_k\right),\forall k,\\
&q_k\geq 0,\forall k,\\
&\sum_{k=1}^K \omega_kq_k\leq K-1,\\
&\|\mathbf{w}\|^2\leq P_{\text{max}}.
\end{align}
\end{subequations}
It is a convex problem, which can be solved by existing convex program solvers like CVX \cite{grant2014cvx}.

With given $\mathbf{w}$, constraint \eqref{P3C1} can be rewritten as
\begin{equation}
\setlength\abovedisplayskip{2pt}
\setlength\belowdisplayskip{2pt}
  |c_k+\widetilde{\mathbf{v}}^H\mathbf{d}_k|^2\geq \sigma_k^2\left(\gamma_{\text{th}}^{(\text{I})}-q_k\right),\forall k,
\end{equation}
where $c_k=\mathbf{h}_{d,k}^H\mathbf{w},\mathbf{d}_k=\mathbf{G}_k\mathbf{w}$. By applying the first-order Taylor approximation, $|c_k+\widetilde{\mathbf{v}}^H\mathbf{d}_k|^2$ is lower bounded by
\begin{align}
\setlength\abovedisplayskip{2pt}
\setlength\belowdisplayskip{2pt}
  \mathcal{G}_k(\widetilde{\mathbf{v}})=&2\mathrm{Re}\begin{Bmatrix}\left(c_k\mathbf{d}_k^H
  +\left(\widetilde{\mathbf{v}}^{(i)}\right)^H
  \mathbf{d}_k\mathbf{d}_k^H\right)\left(\widetilde{\mathbf{v}}-\widetilde{\mathbf{v}}^{(i)}\right)
  \end{Bmatrix}\notag\\
  &+\begin{vmatrix}c_k+\left(\widetilde{\mathbf{v}}^{(i)}\right)^H\mathbf{d}_k
  \end{vmatrix}^2,
\end{align}
where $\widetilde{\mathbf{v}}^{(i)}$ is the solution obtained in the $i$-th iteration. Thus, we can convert constraint \eqref{P3C1} into a convex one
\begin{equation}
\setlength\abovedisplayskip{2pt}
\setlength\belowdisplayskip{2pt}
  \mathcal{G}_k(\widetilde{\mathbf{v}})\geq \sigma_k^2\left(\gamma_{\text{th}}^{(\text{I})}-q_k\right),\forall k.
\end{equation}

The same technique is applied to deal with $\|\mathbf{q}\|_0$. The remaining challenge is the nonconvex unit-modulus constraint \eqref{unit-modulus}. To make it more tractable, we first relax it as
\begin{equation}
\setlength\abovedisplayskip{2pt}
\setlength\belowdisplayskip{2pt}
  |\widetilde{\mathbf{v}}_m|\leq1,\forall m.
\end{equation}
Accordingly, we have the following convex optimization problem
\begin{subequations}
\setlength\abovedisplayskip{2pt}
\setlength\belowdisplayskip{2pt}
\begin{align}
\mathrm{AltMin\!-\!P4\!-\!2}:\min_{\mathbf{q,\widetilde{\mathbf{v}}}} \quad&\sum_{k=1}^K \omega_kq_k\\
\mathrm{s.t.}~~~ &\mathcal{G}_k(\widetilde{\mathbf{v}})\geq \sigma_k^2\left(\gamma_{\text{th}}^{(\text{I})}-q_k\right),\forall k,\\
&q_k\geq 0,\forall k,\\
&\sum_{k=1}^K \omega_kq_k\leq K-1,\\
&|\widetilde{\mathbf{v}}_m|\leq1,\forall m.
\end{align}
\end{subequations}
Note that the solution obtained by solving $\mathrm{AltMin\!-\!P4\!-\!2}$ may not satisfy the unit-modulus constraint. One way to construct feasible phase shifts is given by
\begin{equation}
\setlength\abovedisplayskip{2pt}
\setlength\belowdisplayskip{2pt}
  \widetilde{\mathbf{v}}^{*}_m=\widetilde{\mathbf{v}}_m/|\widetilde{\mathbf{v}}_m|,\forall m.
\end{equation}
With given $\mathbf{w}$ and the obtained solution $\widetilde{\mathbf{v}}^{*}$, the remaining $q_k$ can be obtained by the closed-form expression $q_k=\gamma_{\text{th}}^{(\text{I})}-\gamma_k^{(\text{I})}$.

The overall AltMin algorithm is summarized in Algorithm \ref{AltMin}. This algorithm alternately optimizes two blocks defined by problems $\mathrm{AltMin\!-\!P4\!-\!1}$ and $\mathrm{AltMin\!-\!P4\!-\!2}$, respectively. The objective function denoted by $\mathcal{A}=\sum_{k=1}^K \omega_kq_k$ follows that $\mathcal{A}(\mathbf{w}^{(i)},\widetilde{\mathbf{v}}^{(i)})\geq
\mathcal{A}(\mathbf{w}^{(i+1)},\widetilde{\mathbf{v}}^{(i)})\geq
\mathcal{A}(\mathbf{w}^{(i+1)},\widetilde{\mathbf{v}}^{(i+1)})$, where $\mathbf{w}^{(i+1)},\widetilde{\mathbf{v}}^{(i+1)}$ are the optimal solutions obtained
for both blocks. The inequalities hold since $\gamma_k^{(\text{I})}$ can be maximized with optimal $\mathbf{w}^{(i+1)},\widetilde{\mathbf{v}}^{(i+1)}$, thereby reducing $q_k$ and lowering the objective function. Hence, the algorithm is guaranteed to converge to a stationary point of problem $\mathrm{P3}$ in polynomial time \cite{RazaviyaynSJO2013}.

In each iteration, the complexity of solving problem $\mathrm{AltMin\!-\!P4\!-\!1}$ is $\mathcal{O}((K+N_t)^3(2K+2))$ \cite{ImrePolik2010}, where $K+N_t$ is the number of optimization variables and $2K+2$ is the number of affine and convex constraints. Similarly, the complexity of solving problem $\mathrm{AltMin\!-\!P4\!-\!2}$ is $\mathcal{O}((K+M)^3(2K+M+1))$. Thus, the computational complexity order per iteration is $\mathcal{O}(2K(K+N_t)^3+(2K+M)(K+M)^3)$.
\begin{algorithm}[t]
\caption{\small \textbf{:} AltMin Algorithm for Solving Problem $\mathrm{P3}$ with Perfect CSI}
\label{AltMin}
\begin{spacing}{1.2}
	\small{\begin{algorithmic}[1]
		\STATE \textbf{Initialize} $\mathbf{w}^{(0)},\widetilde{\mathbf{v}}^{(0)}$, convergence tolerance $\epsilon$, and set outer iteration index $t=0$.
        \STATE Use bisection search method to obtain $\gamma_{\text{th}}^{(\text{I})}$.
		\REPEAT
        \STATE Given $\widetilde{\mathbf{v}}^{(t)}$, inner iteration index $i=0$
        \REPEAT
        \STATE Update $\mathbf{q}^{(i+1)},\mathbf{w}^{(i+1)}$ by solving problem $\mathrm{AltMin\!-\!P4\!-\!1}$ with given $\mathbf{w}^{(i)}$.
        \STATE $i\leftarrow i+1$.
        \UNTIL $|\mathcal{A}_1^{(i)}-\mathcal{A}_1^{(i-1)}|\leq \epsilon$.
        \STATE Set $\mathbf{q}^{(t)}=\mathbf{q}^{(i)},\mathbf{w}^{(t+1)}=\mathbf{w}^{(i)}$.
        \STATE Given $\mathbf{w}^{(t+1)}$ , inner iteration index $j=0$.
        \REPEAT
        \STATE Update $\mathbf{q}^{(j+1)},\widetilde{\mathbf{v}}^{(j+1)}$ by solving problem $\mathrm{AltMin\!-\!P4\!-\!2}$ with given $\widetilde{\mathbf{v}}^{(j)}$.	
        \STATE $j\leftarrow j+1$.	
        \UNTIL $|\mathcal{A}_2^{(j)}-\mathcal{A}_2^{(j-1)}|\leq \epsilon$.
        \STATE Set $\widetilde{\mathbf{v}}^{(j)}_m=\widetilde{\mathbf{v}}^{(j)}_m/|\widetilde{\mathbf{v}}^{(j)}_m|,\forall m$.
        \STATE Set $\mathbf{q}^{(t+1)}=\mathbf{q}^{(j)},\widetilde{\mathbf{v}}^{(t+1)}=\widetilde{\mathbf{v}}^{(j)}$.
        \STATE $t\leftarrow t+1$.
        \UNTIL $\|\mathbf{q}^{(t+1)}\|_1-\|\mathbf{q}^{(t}\|_1\leq \epsilon$.
	\end{algorithmic}}
\end{spacing}
\end{algorithm}
\vspace{-5mm}
\subsection{Proposed Penalty-based SCA Algorithm}\label{SCAsection}
In this algorithm, we first handle the coupling problem. Let $\mathbf{R}_k=\left[\mathbf{G}_k^H\quad \mathbf{h}_{d,k}\right]\in\mathbb{C}^{N_t\times (M+1)}$, and $\mathbf{v}=[\widetilde{\mathbf{v}}^T \quad1]^T$, then we can equivalently rewrite constraint \eqref{P3C1} as
\begin{equation}\label{SNRreformu}
\setlength\abovedisplayskip{3pt}
\setlength\belowdisplayskip{3pt}
  \mathrm{Tr}(\mathbf{W}\mathbf{R}_k\mathbf{V}\mathbf{R}_k^H)
  \geq\sigma_k^2\left(\gamma_{\text{th}}^{(\text{I})}-q_k\right),\forall k,
\end{equation}
where $\mathbf{W}=\mathbf{w}\mathbf{w}^H,\mathbf{V}=\mathbf{v}\mathbf{v}^H$. Note that $\mathbf{W}$ and $\mathbf{V}$ remain coupled in the form of a product in constraint \eqref{SNRreformu}. However, we can leverage the following equality to decompose the product, which is given by
\begin{equation}
\setlength\abovedisplayskip{3pt}
\setlength\belowdisplayskip{3pt}
  \langle\mathbf{A,B}\rangle_F=\frac{1}{2}\left(\|\mathbf{A}+\mathbf{B}\|_{F}^{2}-\|\mathbf{A}\|_{F}^{2}
  -\|\mathbf{B}\|_{F}^{2}\right),
\end{equation}
where $\langle\mathbf{A,B}\rangle_F=\mathrm{Tr}(A^HB)$ is the Frobenius inner product. Therefore, constraint \eqref{SNRreformu} can be equivalently written as
\begin{align}\label{Frobenreform}
\setlength\abovedisplayskip{3pt}
\setlength\belowdisplayskip{3pt}
  &\frac{1}{2}\left(\|\mathbf{W}+\mathbf{R}_k\mathbf{V}\mathbf{R}_k^H\|_{F}^{2}
  -\|\mathbf{W}\|_{F}^{2}
  -\|\mathbf{R}_k\mathbf{V}\mathbf{R}_k^H\|_{F}^{2}\right)\notag\\
  &~~~~~~~~~~~~~~~~~~~~~~~~~~~~~~~~~~~~\geq\sigma_k^2\left(\gamma_{\text{th}}^{(\text{I})}-q_k\right),\forall k.
\end{align}
Note that the left-hand-side (LHS) term in constraint \eqref{Frobenreform} is in the form of difference-of-convex function. We apply the first-order Taylor approximation to the convex function $\|\mathbf{W}+\mathbf{R}_k\mathbf{V}\mathbf{R}_k^H\|_{F}^{2}$, thus the LHS term is lower bounded by \eqref{equation25} as shown at the top of the next page,
\begin{figure*}[!t]
\setcounter{equation}{24}
\begin{align}
\label{equation25}
  &\mathcal{H}_k(\mathbf{W},\mathbf{V})=\frac{1}{2}\left(\|\mathbf{W}^{(i)}
  +\mathbf{R}_k\mathbf{V}^{(i)}\mathbf{R}_k^H\|_{F}^{2}+2\mathrm{Tr}\begin{pmatrix}\mathrm{Re}
  \begin{Bmatrix}\begin{pmatrix}\mathbf{W}^{(i)}
  +\mathbf{R}_k\mathbf{V}^{(i)}\mathbf{R}_k^H\end{pmatrix}\begin{pmatrix}\mathbf{W}
  -\mathbf{W}^{(i)}\end{pmatrix}\end{Bmatrix}\end{pmatrix}\right.\notag\\
  &\left.+2\mathrm{Tr}\begin{pmatrix}\mathrm{Re}
  \begin{Bmatrix}\begin{pmatrix}\mathbf{R}_k^H\begin{pmatrix}\mathbf{W}^{(i)}
  \!+\!\mathbf{R}_k\mathbf{V}^{(i)}
  \mathbf{R}_k^H\end{pmatrix}\mathbf{R}_k\end{pmatrix}
  \begin{pmatrix}\mathbf{V}
  \!-\!\mathbf{V}^{(i)}\end{pmatrix}\end{Bmatrix}\end{pmatrix}\!-\!
  \|\mathbf{W}\|_{F}^{2}
  \!-\!\|\mathbf{R}_k\mathbf{V}\mathbf{R}_k^H\|_{F}^{2}\right),
\end{align}
\hrulefill
\vspace*{1pt}
\end{figure*}
where $\mathbf{W}^{(i)},\mathbf{V}^{(i)}$ are the solutions obtained in the $i$-th iteration. In this way, we can transform constraint \eqref{Frobenreform} into a convex one
\begin{equation}\label{convexSNR}
\setlength\abovedisplayskip{3pt}
\setlength\belowdisplayskip{3pt}
  \mathcal{H}_k(\mathbf{W},\mathbf{V})\geq\sigma_k^2\left(\gamma_{\text{th}}^{(\text{I})}-q_k\right),\forall k.
\end{equation}

For the $\|\mathbf{q}\|_0$, we use reweighted $\ell_1$-norm for approximation. Through above transformations, problem $\mathrm{P3}$ can be reformulated as
\begin{subequations}
\setlength\abovedisplayskip{2pt}
\setlength\belowdisplayskip{2pt}
\begin{align}
\!\!\!\!\mathrm{SCA\!-\!P4}:\min_{\mathbf{q,W,V}} \quad&\sum_{k=1}^K \omega_kq_k\\
\mathrm{s.t.}\quad~ &\mathcal{H}_k(\mathbf{W},\mathbf{V})\geq\sigma_k^2\left(\gamma_{\text{th}}^{(\text{I})}-q_k\right),\forall k,\label{SCAP4C1}\\
&q_k\geq 0,\forall k,\\
&\sum_{k=1}^K \omega_kq_k\leq K-1,\\
&\mathrm{Tr}(\mathbf{W})\leq P_{\text{max}},\\
&\mathbf{V}_{mm}=1,m=1,\cdots,M+1,\label{diagV}\\
&\mathbf{W}\succeq \mathbf{0},\mathbf{V}\succeq \mathbf{0},\label{PSD}\\
&\mathrm{rank}(\mathbf{W})=1,\mathrm{rank}(\mathbf{V})=1,\label{rankone}
\end{align}
\end{subequations}
where constraint \eqref{diagV} is introduced to guarantee the unit-modulus of each IRS element.

Our main focus now is to deal with the rank-one constraints in \eqref{rankone}. To this end, we apply the following lemma to represent them equivalently.
\begin{lemma}\label{rankonelemma}
 For any $\mathbf{A}\in\mathbb{H}^{n}$, the constraint that $\mathbf{A}$ is rank-one is equivalent to
  \begin{equation}
\setlength\abovedisplayskip{2pt}
\setlength\belowdisplayskip{2pt}
    \|\mathbf{A}\|_{*}- \|\mathbf{A}\|_2\leq0.\label{Nuclear_spectal}
  \end{equation}
\end{lemma}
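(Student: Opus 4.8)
The plan is to reduce both norms to functions of the singular values of $\mathbf{A}$ and then exploit the elementary fact that, for nonnegative numbers, the sum equals the maximum precisely when at most one of them is nonzero. First I would recall that, because $\mathbf{A}\in\mathbb{H}^{n}$ is Hermitian, it admits a spectral decomposition with real eigenvalues $\lambda_1,\dots,\lambda_n$, so its singular values are $\sigma_i=|\lambda_i|$. Under the convention that $\|\cdot\|_{*}$ is the nuclear norm (the sum of the singular values) and $\|\cdot\|_2$ is the spectral norm (the largest singular value) — and it is essential that $\|\cdot\|_2$ here is the spectral norm rather than the Frobenius norm $\|\cdot\|_F$ used elsewhere — this gives $\|\mathbf{A}\|_{*}=\sum_{i=1}^{n}\sigma_i$ and $\|\mathbf{A}\|_2=\max_i\sigma_i$.

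Next I would establish the universal inequality $\|\mathbf{A}\|_{*}-\|\mathbf{A}\|_2=\sum_{i=1}^{n}\sigma_i-\max_i\sigma_i\ge 0$, which holds for every matrix simply because each $\sigma_i\ge 0$. Consequently the stated constraint $\|\mathbf{A}\|_{*}-\|\mathbf{A}\|_2\le 0$ can only be satisfied with equality, so it is equivalent to $\sum_{i=1}^{n}\sigma_i=\max_i\sigma_i$. Since the singular values are nonnegative, this equality forces every singular value other than the largest to vanish; that is, $\mathbf{A}$ has at most one nonzero singular value, which is exactly $\mathrm{rank}(\mathbf{A})\le 1$. Conversely, if $\mathrm{rank}(\mathbf{A})\le 1$ then at most one $\sigma_i$ is nonzero and the two norms coincide, so the constraint \eqref{Nuclear_spectal} holds. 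This closes the equivalence with rank at most one.

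The only point requiring care — which I would flag as the main subtlety rather than a genuine obstacle — is the gap between ``rank at most one'' and ``rank exactly one'': the zero matrix satisfies $\|\mathbf{A}\|_{*}-\|\mathbf{A}\|_2=0$ yet has rank $0$. In the setting where the lemma is applied, however, the matrices are never zero (for instance $\mathbf{V}$ carries the constraints $\mathbf{V}_{mm}=1$, and any feasible $\mathbf{W}$ delivers a positive SNR in \eqref{SNRreformu}), so rank $0$ is ruled out and rank at most one collapses to rank exactly one, recovering the rank-one characterization needed for \eqref{rankone}. I would therefore either state the equivalence for nonzero $\mathbf{A}$ or interpret ``rank-one'' as ``rank at most one,'' so that the claim is literally correct while still serving its purpose in eliminating the constraints in \eqref{rankone}.
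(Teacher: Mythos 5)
Your proposal is correct and follows essentially the same route as the paper's own proof: both reduce $\|\mathbf{A}\|_{*}$ and $\|\mathbf{A}\|_2$ to the sum and maximum of the singular values, note the universal inequality $\|\mathbf{A}\|_{*}\geq\|\mathbf{A}\|_2$, and conclude that the constraint \eqref{Nuclear_spectal} forces equality and hence at most one nonzero singular value. Your flagged subtlety is a genuine (if minor) improvement in rigor: the paper asserts equality holds ``if and only if $\mathbf{A}$ is rank-one,'' silently ignoring the zero matrix, whereas you correctly observe the constraint only yields $\mathrm{rank}(\mathbf{A})\leq 1$ and then rule out rank $0$ from the application context (e.g., $\mathbf{V}_{mm}=1$).
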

\begin{proof}
  For any $\mathbf{A}\in\mathbb{H}^{n}$, the inequality  $\|\mathbf{A}\|_{*}=\sum_{i}\sigma_i\geq\|\mathbf{A}\|_2=\max_{i}\{\sigma_i\}$ holds, where $\sigma_i$ is the $i$-th singular value of matrix $\mathbf{A}$. And the equality holds if and only if $\mathbf{A}$ is rank-one. Thus, the implicit constraint of Hermitian matrix $\mathbf{A}$, i.e., $\|\mathbf{A}\|_{*}- \|\mathbf{A}\|_2\geq0$ functions simultaneously with the constraint \eqref{Nuclear_spectal}, which requires that $\|\mathbf{A}\|_{*}- \|\mathbf{A}\|_2=0$, i.e., $\mathbf{A}$ is a rank-one matrix.
\end{proof}

According to Lemma \ref{rankonelemma}, rank-one constraints of $\mathbf{W,V}$ can be equivalently reformulated as
\begin{equation}
\setlength\abovedisplayskip{2pt}
\setlength\belowdisplayskip{2pt}
  \|\mathbf{W}\|_{*}-\|\mathbf{W}\|_2\leq 0,~\|\mathbf{V}\|_{*}-\|\mathbf{V}\|_2\leq 0.\label{nuclearSpectral}
\end{equation}
Then, we adopt the penalty-based method by moving constraint \eqref{nuclearSpectral} into the objective function, thereby resulting in the following optimization problem
\setlength\abovedisplayskip{2pt}
\setlength\belowdisplayskip{2pt}
\begin{align}
\!\!\!\!\mathrm{SCA\!-\!P5}:\min_{\mathbf{q,W,V}} \quad&\sum_{k=1}^K \omega_kq_k\!\!+\!\!\frac{1}{\rho}(\|\mathbf{W}\|_{*}\!\!-\!\!\|\mathbf{W}\|_2\!\!+\!\!\|\mathbf{V}\|_{*}\!\!
-\!\!\|\mathbf{V}\|_2)\\
\mathrm{s.t.}\quad~ &\eqref{SCAP4C1}-\eqref{PSD},\notag
\end{align}
where $\rho>0$ is a penalty factor penalizing the violation of constraint \eqref{nuclearSpectral}. The following proposition demonstrates the equivalence of problems $\mathrm{SCA\!-\!P4}$ and $\mathrm{SCA\!-\!P5}$. By solving problem $\mathrm{SCA\!-\!P5}$, we can recover $\mathbf{w},\mathbf{v}$ from rank-one solutions $\mathbf{W},\mathbf{V}$ by eigenvalue decomposition.

\begin{proposition}\label{Proposition1}
  Let $\mathbf{W}_s,\mathbf{V}_s$ be the optimal solutions of problem $\mathrm{SCA\!-\!P5}$ with penalty factor
$\rho_s$. For sufficiently small $\rho_s\rightarrow0$, any limit points $\overline{\mathbf{W}},\overline{\mathbf{V}}$ of the sequence $\{\mathbf{W}_s,\mathbf{V}_s\}$ are optimal solutions of problem $\mathrm{SCA\!-\!P4}$.
\end{proposition}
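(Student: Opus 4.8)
The plan is to run a standard penalty-method convergence argument, built around the fact that by Lemma \ref{rankonelemma} the penalty term $p(\mathbf{W},\mathbf{V})\triangleq\|\mathbf{W}\|_{*}-\|\mathbf{W}\|_2+\|\mathbf{V}\|_{*}-\|\mathbf{V}\|_2$ is nonnegative everywhere and vanishes \emph{exactly} when both $\mathbf{W}$ and $\mathbf{V}$ are rank-one. The first observation I would record is that the feasible set $\mathcal{S}$ cut out by constraints \eqref{SCAP4C1}--\eqref{PSD} is the \emph{same} for every $\rho_s$ (the penalty only reshapes the objective), and that it is closed and bounded: $\mathrm{Tr}(\mathbf{W})\leq P_{\text{max}}$ with $\mathbf{W}\succeq\mathbf{0}$ bounds $\mathbf{W}$, the equalities $\mathbf{V}_{mm}=1$ with $\mathbf{V}\succeq\mathbf{0}$ bound $\mathbf{V}$, and $\sum_k\omega_kq_k\leq K-1$ with $\omega_k>0$, $q_k\geq0$ bounds $\mathbf{q}$. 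Hence the iterate sequence $\{\mathbf{q}_s,\mathbf{W}_s,\mathbf{V}_s\}$ lives in a compact set, so limit points exist and subsequential convergence is available.

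Writing $f(\mathbf{q})=\sum_k\omega_kq_k$ for the true objective and $p_s=p(\mathbf{W}_s,\mathbf{V}_s)$, I would next establish monotonicity of the penalty residual. Pitting the optimality of the $\rho_s$-iterate against the $\rho_{s'}$-iterate and vice versa (for $\rho_s>\rho_{s'}$), then adding the two inequalities and rearranging, yields $(\rho_s^{-1}-\rho_{s'}^{-1})(p_s-p_{s'})\leq0$; since $\rho_s^{-1}-\rho_{s'}^{-1}<0$ this gives $p_s\geq p_{s'}$, so $p_s$ is nonincreasing as $\rho_s\downarrow0$ (and correspondingly $f(\mathbf{q}_s)$ is nondecreasing). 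This step is reassuring but, as I note below, not strictly necessary once the decisive estimate is in hand.

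The crux is to show $p_s\to0$. Under the standing assumption that SCA-P4 is feasible, let $(\mathbf{q}^\star,\mathbf{W}^\star,\mathbf{V}^\star)$ be an optimal solution of SCA-P4 with objective $f^\star$; it lies in $\mathcal{S}$ and, being rank-one, satisfies $p(\mathbf{W}^\star,\mathbf{V}^\star)=0$. By optimality of the SCA-P5 iterate for $\rho_s$,
\[
f(\mathbf{q}_s)+\tfrac{1}{\rho_s}p_s\;\leq\;f^\star+\tfrac{1}{\rho_s}\cdot0\;=\;f^\star .
\]
Because $f(\mathbf{q}_s)\geq0$, this forces $p_s\leq\rho_s f^\star\to0$. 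Passing to a convergent subsequence $(\mathbf{q}_s,\mathbf{W}_s,\mathbf{V}_s)\to(\overline{\mathbf{q}},\overline{\mathbf{W}},\overline{\mathbf{V}})$, closedness of $\mathcal{S}$ places the limit in $\mathcal{S}$, while continuity of the nuclear and spectral norms gives $p(\overline{\mathbf{W}},\overline{\mathbf{V}})=\lim_s p_s=0$; by Lemma \ref{rankonelemma} both limit matrices are rank-one, so the limit point is feasible for SCA-P4. For optimality, the displayed inequality already yields $f(\mathbf{q}_s)\leq f^\star$, hence $f(\overline{\mathbf{q}})\leq f^\star$ in the limit, whereas feasibility forces $f(\overline{\mathbf{q}})\geq f^\star$; therefore $f(\overline{\mathbf{q}})=f^\star$ and $(\overline{\mathbf{q}},\overline{\mathbf{W}},\overline{\mathbf{V}})$ solves SCA-P4.

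I expect the only genuine subtlety to be the $p_s\to0$ estimate, which relies entirely on exhibiting a truly rank-one (hence zero-penalty) comparison point from the feasible set of SCA-P4; everything else is bookkeeping with compactness and continuity. The monotonicity relation and the boundedness of $\mathcal{S}$ are the supporting ingredients that guarantee the limit point is well defined and that the argument does not degenerate.
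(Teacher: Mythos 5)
Your proof is correct and takes essentially the same route as the paper's Appendix A: pitting the $\mathrm{SCA\!-\!P5}$ optimum against a rank-one optimum of $\mathrm{SCA\!-\!P4}$ (whose penalty vanishes by Lemma \ref{rankonelemma}) to force the penalty residual to zero, then invoking continuity of $\|\cdot\|_{*}-\|\cdot\|_{2}$ together with its nonnegativity for feasibility of the limit points, and the objective sandwich for optimality. Your supplementary ingredients --- the compactness argument guaranteeing that limit points exist, the monotonicity of the residual, and the explicit use of $f(\mathbf{q}_s)\geq 0$ to obtain $p_s\leq\rho_s f^{\star}$ (which makes rigorous the limit the paper takes somewhat tersely in its inequality (C2)) --- are correct refinements rather than a different argument.
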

\proof Please refer to Appendix \ref{appendixA}.

Note that nuclear norm $\|\mathbf{W}\|_{*},\|\mathbf{V}\|_{*}$ and spectral norm $\|\mathbf{W}\|_2,\|\mathbf{V}\|_2$ are convex functions \cite{Chi2017}. Hence, the penalty terms are in a form of difference-of-convex function. To make it tractable, we exert the first-order Taylor series approximation on $\|\mathbf{W}\|_2$ and $\|\mathbf{V}\|_2$, thereby approximating the penalty terms as equations \eqref{equation31} and \eqref{equation32},
\begin{figure*}[!t]
\setcounter{equation}{30}
\begin{align}
  \mathcal{P}_w&\triangleq\frac{1}{\rho}\left(\|\mathbf{W}\|_{*}-\|\mathbf{W}^{(i)}\|_2-
  \mathrm{Tr}\begin{pmatrix}\mathrm{Re}\!\begin{Bmatrix}
  \pmb{\lambda}_{\max}(\mathbf{W}^{(i)})\pmb{\lambda}_{\max}^H(\mathbf{W}^{(i)})(\mathbf{W}-
  \mathbf{W}^{(i)})\end{Bmatrix}\!\end{pmatrix}\right),\label{equation31}\\
  \mathcal{P}_v&\triangleq\frac{1}{\rho}\left(\|\mathbf{V}\|_{*}-\|\mathbf{V}^{(i)}\|_2-
  \mathrm{Tr}\begin{pmatrix}\mathrm{Re}\!\begin{Bmatrix}
  \pmb{\lambda}_{\max}(\mathbf{V}^{(i)})\pmb{\lambda}_{\max}^H(\mathbf{V}^{(i)})(\mathbf{V}-
  \mathbf{V}^{(i)})\end{Bmatrix}\!\end{pmatrix}\right),\label{equation32}
\end{align}
\hrulefill
\vspace*{1pt}
\end{figure*}
where $\pmb{\lambda}_{\max}(\mathbf{W}),\pmb{\lambda}_{\max}(\mathbf{V})$ denote the eigenvector corresponding to the largest eigenvalue of matrix $\mathbf{W,V}$, respectively, and $\mathbf{W}^{(i)},\mathbf{V}^{(i)}$ are the solutions obtained in the $i$-th iteration.

Therefore, the optimization problem that needs to be solved in the $i$-th iteration can be expressed as
\setlength\abovedisplayskip{2pt}
\setlength\belowdisplayskip{2pt}
\begin{align}
\mathrm{SCA\!-\!P6}:\min_{\mathbf{q,W,V}} \quad&\sum_{k=1}^K \omega_kq_k+\mathcal{P}_w+\mathcal{P}_v \\
\mathrm{s.t.}\quad~ &\eqref{SCAP4C1}-\eqref{PSD},\notag
\end{align}
which can be efficiently solved by standard convex program solvers such as CVX \cite{grant2014cvx}.

According to the preceding analysis, we propose a penalty-based SCA algorithm to solve problem $\mathrm{SCA\!-\!P6}$, which is summarized in Algorithm \ref{PCSIalgorithm}. By iteratively solving a sequence of problem $\mathrm{SCA\!-\!P6}$ at feasible points, we can obtain a stationary-point solution of problem $\mathrm{P3}$ after convergence \cite{RazaviyaynSJO2013}.
\begin{algorithm}[t]
\caption{\small \textbf{:} Penalty-based SCA Algorithm for Solving Problem $\mathrm{P3}$ with Perfect CSI}
\label{PCSIalgorithm}
\begin{spacing}{1.2}
	\small{\begin{algorithmic}[1]
		\STATE \textbf{Initialize}~iteration index $i=0$, convergence tolerance $\epsilon$, feasible $\mathbf{q}^{(0)},\mathbf{W}^{(0)},\mathbf{V}^{(0)}$.
        \STATE Use bisection search method to obtain $\gamma_{\text{th}}^{(\text{I})}$.
        \STATE Calculate $\pmb{\lambda}_{\max}(\mathbf{W}^{(0)}),\pmb{\lambda}_{\max}(\mathbf{V}^{(0)})$.
		\REPEAT
        \STATE Set $i=i+1$.
		\STATE Obtain $\{\mathbf{q,W,V}\}$ by solving $\mathrm{SCA\!-\!P6}$.
        \STATE Update $\mathbf{q}^{(i)}=\mathbf{q},\mathbf{W}^{(i)}=\mathbf{W},\mathbf{V}^{(i)}=\mathbf{V}$.
        \STATE Update $\pmb{\lambda}_{\max}(\mathbf{W}^{(i)}),\pmb{\lambda}_{\max}(\mathbf{V}^{(i)})$.		
        \UNTIL convergence.
	\end{algorithmic}}
\end{spacing}
\end{algorithm}

For problem $\mathrm{SCA\!-\!P6}$, there are in total $N_t^2+(M+1)^2+K$ optimization variables and $2K+M+5$ affine and convex constraints. Therefore, the computational complexity order per iteration is $\mathcal{O}((N_t^2+M^2+K)^3(2K+M))$ \cite{ImrePolik2010}.
\section{Worst-case Robust Beamforming}\label{worstcasesec}
In this section, we proceed to investigate a robust design to achieve reliable communication in the worst-case. To resolve the robust beamforming problem, we propose a SDR-based BCD algorithm. Moreover, convergence and complexity analysis is given for the proposed algorithm.
\vspace{-4mm}
\subsection{Uncertainty Modeling}
Although the CSI of the direct channel $\mathbf{h}_{d,k}$ from the AP to the actuator $k$ can be estimated by conventional methods, CSI acquisition of AP-IRS channel $\mathbf{H}_{dr}$ and IRS-actuator channel $\mathbf{h}_{r,k}$ is more challenging. The is because without the help of active RF chains, a passive IRS cannot transmit and receive signals for channel estimation. Instead of estimating these two individual channels $\mathbf{H}_{dr}$ and $\mathbf{h}_{r,k}$, it is more preferable to attain the estimated CSI of the cascaded AP-IRS-actuator channel $\mathbf{G}_k$. In addition, the difficulty of channel estimation increases in URLLC systems since the time for pilot training is highly limited due to short TTI. Thus, it is more practical to consider the case with imperfect CSI of both direct AP-actuator channels and cascaded AP-IRS-actuator channels. This is different from most existing IRS-related works that only consider the uncertainty of cascaded AP-IRS-actuator channels \cite{Zhou2020,Wang2020}.

Channel uncertainty modeling mainly falls into two categories: 1) stochastic CSI error, where the CSI error is assumed to follow the circularly symmetric complex Gaussian (CSCG) distribution;  2) bounded CSI error, where all CSI errors lie in a norm-bounded uncertainty region. From \cite{HanIToWC2021}, we know that based on the worst-case bounding approach, we can convert the probabilistic constraint involving Gaussian CSI uncertainties into a deterministic norm-bounded form. Hence, in this paper, we adopt the norm-bounded CSI error model. Specifically, the CSI of the direct AP-actuator channel $\mathbf{h}_{d,k}$ and the cascaded AP-IRS-actuator channel $\mathbf{G}_k$ can be modeled as
\begin{align}
\setlength\abovedisplayskip{2pt}
\setlength\belowdisplayskip{2pt}
&\mathbf{h}_{d,k}=\hat{\mathbf{h}}_{d,k}+\Delta\mathbf{h}_{d,k},\notag\\
&\Delta\mathbf{h}_{d,k}\in
\Omega_{\mathbf{h}_{d,k}}\triangleq\{\Delta\mathbf{h}_{d,k}:\|\Delta\mathbf{h}_{d,k}\|_2
\leq \delta_{\mathbf{h}_{d,k}}\},\forall k,\notag\\
&\mathbf{G}_k=\widehat{\mathbf{G}}_k+\Delta\mathbf{G}_k,\notag\\
&\Delta\mathbf{G}_k\in\Omega_{\mathbf{G}_k}
\triangleq\{\Delta\mathbf{G}_k:\|\Delta\mathbf{G}_k\|_F\leq \delta_{\mathbf{G}_k}\},\forall k,
\end{align}
where $\hat{\mathbf{h}}_{d,k},\widehat{\mathbf{G}}_k$ are the estimates of the direct AP-actuator channel and the cascaded AP-IRS-actuator channel, respectively. The corresponding estimation errors $\Delta\mathbf{h}_{d,k},\Delta\mathbf{G}_k$, lying in continuous sets $\Omega_{\mathbf{h}_{d,k}},\Omega_{\mathbf{G}_k}$ are norm-bounded by $\delta_{\mathbf{h}_{d,k}},\delta_{\mathbf{G}_k}$, respectively. Meanwhile, $\delta_{\mathbf{h}_{d,k}}>0,\delta_{\mathbf{G}_k}>0$ represent the level of channel uncertainty, which can be chosen smaller for more accurate quantization and channel estimation.
\vspace{-5mm}
\subsection{Proposed SDR-based BCD Algorithm}
Under the case with imperfect CSI, $\mathbf{R}_k=\left[\mathbf{G}_k^H\quad \mathbf{h}_{d,k}\right]$, the effective channel from AP to actuators, contains the uncertainty of direct AP-actuator channels and cascaded AP-IRS-actuator channels, which can be expressed as
\begin{equation}
\setlength\abovedisplayskip{2pt}
\setlength\belowdisplayskip{2pt}
  \!\!\mathbf{R}_k\!\!=\!\!\left[\mathbf{G}_k^H\quad \mathbf{h}_{d,k}\right]\!\!=\!\![\widehat{\mathbf{G}}_k^H\quad \hat{\mathbf{h}}_{d,k}]\!+\![\Delta\mathbf{G}_k^H\quad\Delta\mathbf{h}_{d,k}]
  \!\!=\!\!\widehat{\mathbf{R}}_k\!+\!\Delta\mathbf{R}_k,
\end{equation}
where
\begin{equation}
\setlength\abovedisplayskip{2pt}
\setlength\belowdisplayskip{2pt}
  \|\Delta\mathbf{R}_k\|_F\!=\!\sqrt{\|\Delta\mathbf{G}_k\|_F^2+\|\Delta\mathbf{h}_{d,k}\|_2^2}
  \!\leq\!\sqrt{\delta_{\mathbf{G}_k}^2+\delta_{\mathbf{h}_{d,k}}^2}\!\triangleq\! \delta_k.
\end{equation}

The robust beamforming problem for imperfect CSI can be given by
\begin{subequations}
\setlength\abovedisplayskip{2pt}
\setlength\belowdisplayskip{2pt}
\begin{align}
\!\!\!\mathrm{ICSI\!-\!P4}:\notag\\
\min_{\mathbf{q,W,V}} \quad&\|\mathbf{q}\|_0\\
\mathrm{s.t.}\quad~ &\mathrm{Tr}(\mathbf{W}\mathbf{R}_k\mathbf{V}\mathbf{R}_k^H)
  \!\!\geq\!\!\sigma_k^2\left(\gamma_{\text{th}}^{(\text{I})}\!-\!q_k\right),\!\forall \|\Delta\mathbf{R}_k\|_F \!\!\leq\!\! \delta_k, \!\forall k,\label{ICSIP4C1}\\
&q_k\geq 0,\forall k,\label{ICSIP4C2}\\
&\|\mathbf{q}\|_0\leq K-1,\\
&\mathrm{Tr}(\mathbf{W})\leq P_{\text{max}},\label{ICSIP4C4}\\
&\mathbf{V}_{mm}=1,m=1,\cdots,M+1,\label{ICSIP4C5}\\
&\mathbf{W}\succeq \mathbf{0},\mathbf{V}\succeq \mathbf{0},\label{ICSIP4C6}\\
&\mathrm{rank}(\mathbf{W})=1,\mathrm{rank}(\mathbf{V})=1.\label{ICSIP4C7}
\end{align}
\end{subequations}
Constraint \eqref{ICSIP4C1} has the following meaning. If the SNR of actuator $k$ can reach the SNR threshold for all possible CSI errors, we consider that the $k$-th actuator can successfully decode the message in the first stage, i.e., $q_k=0$. Otherwise, the decoding of actuator $k$ fails and $q_k$ is a positive number. Furthermore, it is challenging to deal with the infinite number of nonconvex SNR constraints in \eqref{ICSIP4C1} due to the continuity of the channel uncertainty set. The difficulties that remain to be solved are the $\|\mathbf{q}\|_0$ and rank-one constraints.

First, we need to apply the S-procedure in Lemma \ref{S-procedure} to convert the infinite number of SNR constraints in \eqref{ICSIP4C1} into a finite number of constraints. To this end, we recast \eqref{ICSIP4C1} as
\begin{equation}
\setlength\abovedisplayskip{2pt}
\setlength\belowdisplayskip{2pt}
  \!\!\mathrm{vec}^H(\mathbf{R}_k)(\mathbf{V}^T\!\otimes\!\mathbf{W})\mathrm{vec}(\mathbf{R}_k)
  \!\!\geq\!\!\sigma_k^2\left(\gamma_{\text{th}}^{(\text{I})}\!-\!q_k\right),\!\forall \|\Delta\mathbf{R}_k\|_F\!\!\leq \!\! \delta_k,
\end{equation}
by utilizing the following matrix identities $\mathrm{Tr}(\mathbf{ABC})=\mathrm{Tr}(\mathbf{BCA})=\mathrm{Tr}(\mathbf{CAB}),~
\mathrm{Tr}(\mathbf{A}^H\mathbf{B})=\mathrm{vec}^H(\mathbf{A})\mathrm{vec}(\mathbf{B})$, and $\mathrm{vec}(\mathbf{AXB})=(\mathbf{B}^T\otimes\mathbf{A})\mathrm{vec}(\mathbf{X})$.

Further, denote $\mathbf{r}_k=\mathrm{vec}(\mathbf{R}_k),\hat{\mathbf{r}}_k=\mathrm{vec}(\widehat{\mathbf{R}}_k),
\Delta\mathbf{r}_k=\mathrm{vec}(\Delta\mathbf{R}_k)$ bounded by $\|\Delta\mathbf{r}_k\|_2\leq\delta_k$. Thus, constraint \eqref{ICSIP4C1} can be transformed into the following more tractable form as
\begin{align}\label{SNRFullunvertainty}
\setlength\abovedisplayskip{2pt}
\setlength\belowdisplayskip{2pt}
  &\Delta \mathbf{r}_k^H(\mathbf{V}^T\otimes\!\mathbf{W})\Delta \mathbf{r}_k
  +2\mathrm{Re}\{\hat{\mathbf{r}}_k^H(\mathbf{V}^T\otimes\mathbf{W})\Delta \mathbf{r}_k\}
  \notag\\
  &\!+\!\hat{\mathbf{r}}_k^H
  (\mathbf{V}^T\!\otimes\!\mathbf{W})\hat{\mathbf{r}}_k\!\geq\!\sigma_k^2\left(\gamma_{\text{th}}^{(\text{I})}\!-\!q_k\right),
  \forall \|\Delta\mathbf{r}_k\|_2\!\leq\!\delta_k,\forall k.
\end{align}

\begin{lemma}\label{S-procedure}
  (S-procedure for the complex case \cite{Chi2017}): Let $\mathbf{F}_{1}, \mathbf{F}_{2} \in \mathbb{H}^{n}, \mathbf{s}_{1}, \mathbf{s}_{2} \in \mathbb{C}^{n}$, $z_{1}, z_{2} \in \mathbb{R}$. The following implication
\begin{equation}
\setlength\abovedisplayskip{2pt}
\setlength\belowdisplayskip{2pt}
\mathbf{x}^{H} \mathbf{F}_{1} \mathbf{x}\!+\!2\operatorname{Re}\!\left\{\mathbf{s}_{1}^{H} \mathbf{x}\right\}\!+\!z_{1} \!\leq\! 0\!\Rightarrow\! \mathbf{x}^{H} \mathbf{F}_{2} \mathbf{x}\!+\!2\operatorname{Re}\!\left\{\mathbf{s}_{2}^{H} \mathbf{x}\right\}\!+\!z_{2}\!\leq\! 0 \notag
\end{equation}
holds true if and only if there exists a $\mu \geq 0$ such that
\vspace{-1mm}
\begin{equation}
\setlength\abovedisplayskip{4pt}
\setlength\belowdisplayskip{4pt}
\left[\begin{array}{cc}
\mathbf{F}_{2} & \mathbf{s}_{2} \\
\mathbf{s}_{2}^{H} & z_{2}
\end{array}\right] \preceq \mu\left[\begin{array}{cc}
\mathbf{F}_{1} & \mathbf{s}_{1} \\
\mathbf{s}_{1}^{H} & z_{1}
\end{array}\right] \notag
\end{equation}
provided that there exists a point $\hat{\mathbf{x}}$ with $\hat{\mathbf{x}}^{H} \mathbf{F}_{1} \hat{\mathbf{x}}+2 \operatorname{Re}\left\{\mathbf{s}_{1}^{H} \hat{\mathbf{x}}\right\}+z_{1}<0$.
\end{lemma}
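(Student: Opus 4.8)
The plan is to homogenize the two inhomogeneous quadratics and then treat the two directions of the equivalence separately. Introduce the lifted vector $\tilde{\mathbf{x}}=[\mathbf{x}^H~1]^H\in\mathbb{C}^{n+1}$ and the Hermitian matrices
\[
\mathbf{M}_i=\left[\begin{array}{cc}\mathbf{F}_i & \mathbf{s}_i\\ \mathbf{s}_i^H & z_i\end{array}\right],\quad i\in\{1,2\},
\]
so that each form $g_i(\mathbf{x})\triangleq\mathbf{x}^H\mathbf{F}_i\mathbf{x}+2\operatorname{Re}\{\mathbf{s}_i^H\mathbf{x}\}+z_i$ becomes the homogeneous expression $\tilde{\mathbf{x}}^H\mathbf{M}_i\tilde{\mathbf{x}}$. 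The sufficiency direction is then immediate: if $\mathbf{M}_2\preceq\mu\mathbf{M}_1$ for some $\mu\geq0$, then any $\mathbf{x}$ with $g_1(\mathbf{x})\leq0$ obeys $g_2(\mathbf{x})=\tilde{\mathbf{x}}^H\mathbf{M}_2\tilde{\mathbf{x}}\leq\mu\,\tilde{\mathbf{x}}^H\mathbf{M}_1\tilde{\mathbf{x}}=\mu\,g_1(\mathbf{x})\leq0$, which is exactly the asserted implication.

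The substantive part is necessity, for which I would run a separating-hyperplane argument on the joint numerical range. First I would upgrade the given inhomogeneous implication to the homogeneous disjointness statement that no $\tilde{\mathbf{x}}\in\mathbb{C}^{n+1}$ satisfies $\tilde{\mathbf{x}}^H\mathbf{M}_1\tilde{\mathbf{x}}\leq0$ and $\tilde{\mathbf{x}}^H\mathbf{M}_2\tilde{\mathbf{x}}>0$ simultaneously; the lifts with nonzero last coordinate are equivalent to the original implication by positive scaling, while the directions ``at infinity'' (zero last coordinate) are excluded by a recession argument that perturbs the strictly feasible point $\hat{\mathbf{x}}$ along such a direction. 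Then I would form the image set
\[
\mathcal{D}=\big\{(\tilde{\mathbf{x}}^H\mathbf{M}_1\tilde{\mathbf{x}},\,\tilde{\mathbf{x}}^H\mathbf{M}_2\tilde{\mathbf{x}}):\tilde{\mathbf{x}}\in\mathbb{C}^{n+1}\big\}\subseteq\mathbb{R}^2,
\]
and invoke the fact that, for a pair of complex Hermitian matrices, $\mathcal{D}$ is a convex cone (the complex analogue of the Dines/Toeplitz--Hausdorff convexity result). The disjointness says $\mathcal{D}$ does not meet the convex region $\{(a,b):a\leq0,\,b>0\}$, so separating the two convex sets by a hyperplane through the origin yields scalars $\lambda_1,\lambda_2\geq0$, not both zero, with $\lambda_2\,\tilde{\mathbf{x}}^H\mathbf{M}_2\tilde{\mathbf{x}}\leq\lambda_1\,\tilde{\mathbf{x}}^H\mathbf{M}_1\tilde{\mathbf{x}}$ for all $\tilde{\mathbf{x}}$, i.e. $\lambda_2\mathbf{M}_2\preceq\lambda_1\mathbf{M}_1$.

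It remains to rule out the degenerate separator $\lambda_2=0$, and this is exactly where strict feasibility is indispensable: if $\lambda_2=0$ then $\lambda_1>0$ and $\mathbf{M}_1\succeq\mathbf{0}$, so that $g_1(\mathbf{x})=\tilde{\mathbf{x}}^H\mathbf{M}_1\tilde{\mathbf{x}}\geq0$ for every $\mathbf{x}$; in particular $g_1(\hat{\mathbf{x}})\geq0$, contradicting the strict feasibility $g_1(\hat{\mathbf{x}})<0$. Hence $\lambda_2>0$, and setting $\mu\triangleq\lambda_1/\lambda_2\geq0$ gives the claimed matrix inequality $\mathbf{M}_2\preceq\mu\mathbf{M}_1$. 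I expect the main obstacle to be establishing the convexity of $\mathcal{D}$, since this is the deep ingredient that makes the separation legitimate and is precisely the property limiting the clean S-lemma to two forms in the complex case; once convexity is available, the separation and the Slater-based exclusion of $\lambda_2=0$ are routine. For a self-contained argument I would prove convexity via a standard Toeplitz--Hausdorff-type perturbation, otherwise it can simply be cited as in \cite{Chi2017}.
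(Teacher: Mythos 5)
Your proposal is correct, but there is nothing in the paper to compare it against: Lemma~\ref{S-procedure} is stated as a known result and cited directly from \cite{Chi2017}, with no proof given in the paper itself. Your blind argument is the standard textbook proof of the complex S-lemma, and it is sound in all its parts: the homogenization via $\tilde{\mathbf{x}}=[\mathbf{x}^H~1]^H$ correctly converts $g_i(\mathbf{x})$ into $\tilde{\mathbf{x}}^H\mathbf{M}_i\tilde{\mathbf{x}}$ (with $|t|^2 g_i(\mathbf{y}/t)=\tilde{\mathbf{x}}^H\mathbf{M}_i\tilde{\mathbf{x}}$ for $t\neq 0$, so positive scaling does reduce the lifted statement to the original one); the sufficiency direction is indeed a one-line consequence of $\mathbf{M}_2\preceq\mu\mathbf{M}_1$ and needs no Slater condition; the recession argument for directions at infinity works because the free complex phase lets you force the sign of the cross term $2\lambda\operatorname{Re}\{e^{i\theta}(\hat{\mathbf{x}}^H\mathbf{F}_1+\mathbf{s}_1^H)\mathbf{y}\}$, covering the delicate boundary case $\mathbf{y}^H\mathbf{F}_1\mathbf{y}=0$; and your use of strict feasibility exactly where it is needed --- to exclude the degenerate separator $\lambda_2=0$ --- matches the role of the proviso in the lemma statement. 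You also correctly identify the one deep ingredient, the convexity of the joint range $\mathcal{D}$ of two complex Hermitian forms: this follows from the Toeplitz--Hausdorff theorem applied to $\mathbf{M}_1+i\mathbf{M}_2$ on the unit sphere together with the observation that the conical hull of a convex set is convex, and it is precisely what makes the complex S-lemma lossless where a naive real-field analogue would require Dines-type hypotheses. The only caveat worth recording is that your write-up is a plan rather than a complete proof at two points --- the convexity of $\mathcal{D}$ and the separation bookkeeping (signs $\alpha\leq 0$, $\beta\geq 0$ obtained by testing the separator on $\{a\leq 0,\,b>0\}$) --- but both fill in routinely, and citing \cite{Chi2017} for the convexity step, as you suggest, is exactly what the paper itself does for the entire lemma.
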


According to Lemma \ref{S-procedure}, the infinite number of constraints in \eqref{SNRFullunvertainty} can be represented as the following $K$ matrix inequalities
\begin{align}\label{LMI}
\setlength\abovedisplayskip{2pt}
\setlength\belowdisplayskip{2pt}
&\!\!\!\!\begin{bmatrix}
\mu_k\mathbf{I}_{N_t(M+1)}\!\!+\!\!\mathbf{V}^T\!\!\otimes\!\!\mathbf{W}
&(\!\mathbf{V}^T\!\!\otimes\!\!\mathbf{W})\hat{\mathbf{r}}_k\\
\hat{\mathbf{r}}_k^H(\mathbf{V}^T\!\!\otimes\!\!\mathbf{W}\!)
&\hat{\mathbf{r}}_k^H(\!\mathbf{V}^T\!\!\otimes\!\!\mathbf{W}\!)\hat{\mathbf{r}}_k
\!\!-\!\!\mu_k\delta_k^2\!\!-\!\!\sigma_k^2\left(\!\gamma_{\text{th}}^{(\text{I})}\!\!-\!\!q_k\!\right)
\end{bmatrix}\!\!\succeq\!\!\mathbf{0},\!\forall k,\\
&\Leftrightarrow\pmb{\Xi}_k+\mathbf{E}_k^H(\mathbf{V}^T\otimes\mathbf{W})\mathbf{E}_k\succeq\mathbf{0},
\forall k,\label{sprocedure}
\end{align}
where $\mu_k \geq 0$, $\mathbf{E}_k=[\mathbf{I}_{N_t(M+1)}\quad\hat{\mathbf{r}}_k]$, and
\begin{equation}
\setlength\abovedisplayskip{2pt}
\setlength\belowdisplayskip{2pt}
  \pmb{\Xi}_k=\begin{bmatrix}
\mu_k\mathbf{I}_{N_t(M+1)}
&\mathbf{0}\\
\mathbf{0}
&-\mu_k\delta_k^2-\sigma_k^2\left(\gamma_{\text{th}}^{(\text{I})}-q_k\right)
\end{bmatrix}.
\end{equation}

The reweighted $\ell_1$ method is applied to cope with $\|\mathbf{q}\|_0$. Note that the rank-one constraints in \eqref{ICSIP4C7} are nonconvex and difficult to solve. In view of this, we first drop the rank-one constraints to obtain a relaxed problem. Therefore, we have the following optimization problem
\begin{subequations}
\setlength\abovedisplayskip{2pt}
\setlength\belowdisplayskip{2pt}
\begin{align}
\!\!\!\mathrm{ICSI\!-\!P5}:\min_{\substack{\mathbf{q,W,V}\\\pmb{\mu}\succeq \mathbf{0}}} \quad&\sum_{k=1}^K \omega_kq_k\\
\mathrm{s.t.}\quad~ &\pmb{\Xi}_k+\mathbf{E}_k^H(\mathbf{V}^T\otimes\mathbf{W})\mathbf{E}_k\succeq\mathbf{0},
\forall k,\label{ICSIP5C1}\\
&\sum_{k=1}^K \omega_kq_k\leq K-1,\label{ICSIP5C2}\\
&\eqref{ICSIP4C2},\eqref{ICSIP4C4}-\eqref{ICSIP4C6}.\notag
\end{align}
\end{subequations}

It is essential to emphasize that $\mathbf{W},\mathbf{V}$ remain coupled in the form of the Kronecker product in the matrix inequality constraint \eqref{ICSIP5C1}, which is different from the coupling problem of $\mathbf{W},\mathbf{V}$ in the real-valued inequality constraint \eqref{SNRreformu} as mentioned in Section \ref{SCAsection}. Therefore, we cannot leverage a similar technique for the product decomposition. In light of this, we use a BCD method to split problem $\mathrm{ICSI\!-\!P5}$ into two subproblems, i.e., we alternately optimize $\mathbf{q,W}$ with given $\mathbf{V}$ and $\mathbf{q,V}$ with given $\mathbf{W}$, which can be seen below. Both two subproblems are semidefinite program (SDP) that can be solved by the CVX tool \cite{grant2014cvx}. The overall SDR-based BCD algorithm is summarized in Algorithm \ref{BCD}.
\begin{subequations}
\setlength\abovedisplayskip{2pt}
\setlength\belowdisplayskip{2pt}
\begin{align}
&\mathrm{ICSI\!-\!P6\!-\!1}:\notag\\
\min_{\mathbf{q,W},\pmb{\mu}} \quad&\sum_{k=1}^K \omega_kq_k\\
&\mathbf{W}\succeq \mathbf{0},\pmb{\mu}\succeq \mathbf{0},\\
&\eqref{ICSIP5C1},\eqref{ICSIP5C2},\eqref{ICSIP4C2},\eqref{ICSIP4C4}.\notag
\end{align}
\end{subequations}

\begin{subequations}
\setlength\abovedisplayskip{2pt}
\setlength\belowdisplayskip{2pt}
\begin{align}
&\mathrm{ICSI\!-\!P6\!-\!2}:\notag\\
\min_{\mathbf{q,V},\pmb{\mu}} \quad&\sum_{k=1}^K \omega_kq_k\\
&\mathbf{V}\succeq \mathbf{0},\pmb{\mu}\succeq \mathbf{0},\\
&\eqref{ICSIP5C1},\eqref{ICSIP5C2},\eqref{ICSIP4C2},\eqref{ICSIP4C5}.\notag
\end{align}
\end{subequations}

\begin{algorithm}[t]
\caption{\small \textbf{:} SDR-based BCD Algorithm for Solving Problem $\mathrm{P3}$ with Imperfect CSI}
\label{BCD}
\begin{spacing}{1.2}
	\small{\begin{algorithmic}[1]
		\STATE \textbf{Initialize} $\mathbf{W}^{(0)},\mathbf{V}^{(0)}$, convergence tolerance $\epsilon$, and set outer iteration index $t=0$.
        \STATE Use bisection search method to obtain $\gamma_{\text{th}}^{(\text{I})}$.
		\REPEAT
        \STATE Given $\mathbf{V}^{(t)}$, inner iteration index $i=0$
        \REPEAT
        \STATE Update $\mathbf{q}^{(i+1)},\mathbf{W}^{(i+1)}$ by solving problem $\mathrm{ICSI\!-\!P6\!-\!1}$ with given $\mathbf{W}^{(i)}$.
        \STATE $i\leftarrow i+1$.
        \UNTIL $|\mathcal{A}_1^{(i)}-\mathcal{A}_1^{(i-1)}|\leq \epsilon$.
        \STATE Set $\mathbf{q}^{(t)}=\mathbf{q}^{(i)},\mathbf{W}^{(t+1)}=\mathbf{W}^{(i)}$.
        \STATE Given $\mathbf{W}^{(t+1)}$ , inner iteration index $j=0$.
        \REPEAT
        \STATE Update $\mathbf{q}^{(j+1)},\mathbf{V}^{(j+1)}$ by solving problem $\mathrm{ICSI\!-\!P6\!-\!2}$ with given $\mathbf{V}^{(j)}$.	
        \STATE $j\leftarrow j+1$.	
        \UNTIL $|\mathcal{A}_2^{(j)}-\mathcal{A}_2^{(j-1)}|\leq \epsilon$.
        \STATE Set $\mathbf{q}^{(t+1)}=\mathbf{q}^{(j)},\mathbf{V}^{(t+1)}=\mathbf{V}^{(j)}$.
        \STATE $t\leftarrow t+1$.
        \UNTIL $\|\mathbf{q}^{(t+1)}\|_1-\|\mathbf{q}^{(t}\|_1\leq \epsilon$.
        \STATE
	\end{algorithmic}}
\end{spacing}
\end{algorithm}

The objective function $\mathcal{A}=\sum_{k=1}^K \omega_kq_k$ follows that $\mathcal{A}(\mathbf{W}^{(i)},\mathbf{V}^{(i)})\geq
\mathcal{A}(\mathbf{W}^{(i+1)},\mathbf{V}^{(i)})\geq
\mathcal{A}(\mathbf{W}^{(i+1)},\mathbf{V}^{(i+1)})$, where $\mathbf{W}^{(i+1)},\mathbf{V}^{(i+1)}$ are the optimal solutions obtained for subproblems $\mathrm{ICSI\!-\!P6\!-\!1}$ and $\mathrm{ICSI\!-\!P6\!-\!2}$, respectively. The inequalities hold as $\gamma_k^{(\text{I})}$ can be maximized with optimal $\mathbf{W}^{(i+1)},\mathbf{V}^{(i+1)}$, thereby reducing $q_k$ and lowering the objective function. Therefore, the algorithm can be solved in polynomial time \cite{RazaviyaynSJO2013} with guaranteed convergence.

According to \cite[Theorem 3.12]{ImrePolik2010}, the complexity of an SDP problem with $m$ SDP constraints, where each constraint involves an $n \times n$ PSD matrix, is given by $\mathcal{O}\left(\sqrt{n} \log \frac{1}{\epsilon}\left(m n^{3}+m^{2} n^{2}+m^{3}\right)\right)$, where $\epsilon$ is the convergence tolerance. For SDP problems $\mathrm{ICSI\!-\!P6\!-\!1}$ and $\mathrm{ICSI\!-\!P6\!-\!2}$, we have $m=K+1, n=N_{\mathrm{t}}$ and $m=K+1, n=M+1$, respectively. Thus, the computational complexity of each iteration of the proposed BCD algorithm is $\mathcal{O}(\log \frac{1}{\epsilon}((\sqrt{N_t}+\sqrt{M})K^3+(N_t^{2.5}+M^{2.5}) K^2+(N_t^{3.5}+M^{3.5}) K))$.

When the SDR-based BCD algorithm converges, the obtained solutions $\mathbf{W,V}$ are not always rank-one. To obtain feasible rank-one solutions, we perform the following Gaussian randomization procedure \cite{Wang2020}. First, we decompose $\mathbf{V}$ as $\mathbf{V}=\mathbf{U}_{v} \pmb{\Sigma}_{v} \mathbf{U}_{v}^{H}$, where $\mathbf{U}_{v}=$ $\left[u_{v, 1}, \ldots, u_{v, M+1}\right]$ and $\pmb{\Sigma}_{v}=\operatorname{diag}\left(\lambda_{v, 1}, \ldots, \lambda_{v, M+1}\right)$ are unitary matrices containing eigenvectors and a diagonal matrix with eigenvalues on its diagonal, respectively. Then, we generate a vector satisfying $\mathbf{v}=\mathbf{U}_{v} \pmb{\Sigma}_{v}^{\frac{1}{2}} \mathbf{e}_v$, where $\mathbf{e}_v \sim \mathcal{C} \mathcal{N}\left(\mathbf{0}, \mathbf{I}_{M+1}\right)$. Further, we normalize each element of $\mathbf{v}$ and obtain $\mathbf{v}^{*}$ as the optimal phase shift. Next, we decompose $\mathbf{W}$ as $\mathbf{W}=\mathbf{U}_w^{H} \pmb{\Sigma}_w \mathbf{U}_w$ and obtain a sub-optimal solution as $\mathbf{w}=\mathbf{U}_w^{H} \pmb{\Sigma}_w^{\frac{1}{2}} \mathbf{e}_w$, where $\mathbf{e}_w \sim \mathcal{C} \mathcal{N}\left(\mathbf{0}, \mathbf{I}_{N_t}\right)$. For randomly generated vectors $\mathbf{e}_w$, the best $\mathbf{w}^{*}$ is selected if it satisfies the transmit power constraint \eqref{ICSIP5C1} and the SNR constraint \eqref{ICSIP4C4} while minimizing the objective function.
\section{Simulation Results}\label{simuresults}
In this section, we provide simulation results to evaluate the performance of the proposed two-stage IRS-aided D2D communication protocol in terms of reliable communication capability between AP and actuators under the perfect and imperfect CSI scenarios.
\vspace{-3mm}
\subsection{Simulation Setup}
The AP and IRS are located at (0,0,15) meter (m) and (50,50,15) m, respectively, and all actuators are randomly and uniformly distributed on a circle centered at (50,50,0) m with a radius of 20 m. The path loss is modeled as $\mathrm{PL}=C_0d^{-\alpha}$ \cite{MaoIWCL2021}, where $C_0$, set as -30 dB, is the path loss at the reference distance 1 m, $d$ is the link distance, and $\alpha$ is the path loss exponent. The small-scale fading of AP-IRS and IRS-actuator channels are assumed to be Ricean fading. The direct link between AP and actuators and the D2D link are modeled as Rayleigh fading. Assume the system operates in a mini-slot fashion, i.e., $\tau_1=\tau_2$. The carrier frequency is set as 2.4 GHz. Considering the imperfect CSI effects, an uncertainty ratio, denoted by $\kappa$, is defined as $\kappa=\frac{\delta_k}{\|\mathbf{R}_k\|_F}$, where $\kappa\in[0,1]$ \cite{LiIToSP2011}. Unless otherwise specified, other parameters are listed in Table \ref{table1}.
\begin{table}
\setlength{\belowcaptionskip}{-3mm}
\begin{center}
\caption{Simulation Parameters}
\label{table1}
\small{\begin{tabular}{ccc}
\toprule
Symbol&Definition&Typical value\\
\midrule
$\alpha_1$&\makecell[c]{Path loss exponent for\\ Ricean fading channels} & 2 \\
$\alpha_2$&\makecell[c]{Path loss exponent for\\ Rayleigh fading channels}& 4\\
$\beta$&Ricean factor&2\\
$\sigma_k^2,\forall k$&Noise power  & -70 dBm \\
$B$&Bandwidth & 0.5 MHz \\
$P_{\max}$ & Transmit power at AP  & 43 dBm \\
$P$ & Transmit power at actuators & 23 dBm \\
$\varepsilon_{\text{th}}$& Maximum PEP & $10^{-6}$ \\
$\epsilon$& Convergence tolerance & $10^{-4}$ \\
\bottomrule
\end{tabular}}
\end{center}
\end{table}

\begin{figure}
\centering
\includegraphics[width=0.95\linewidth]{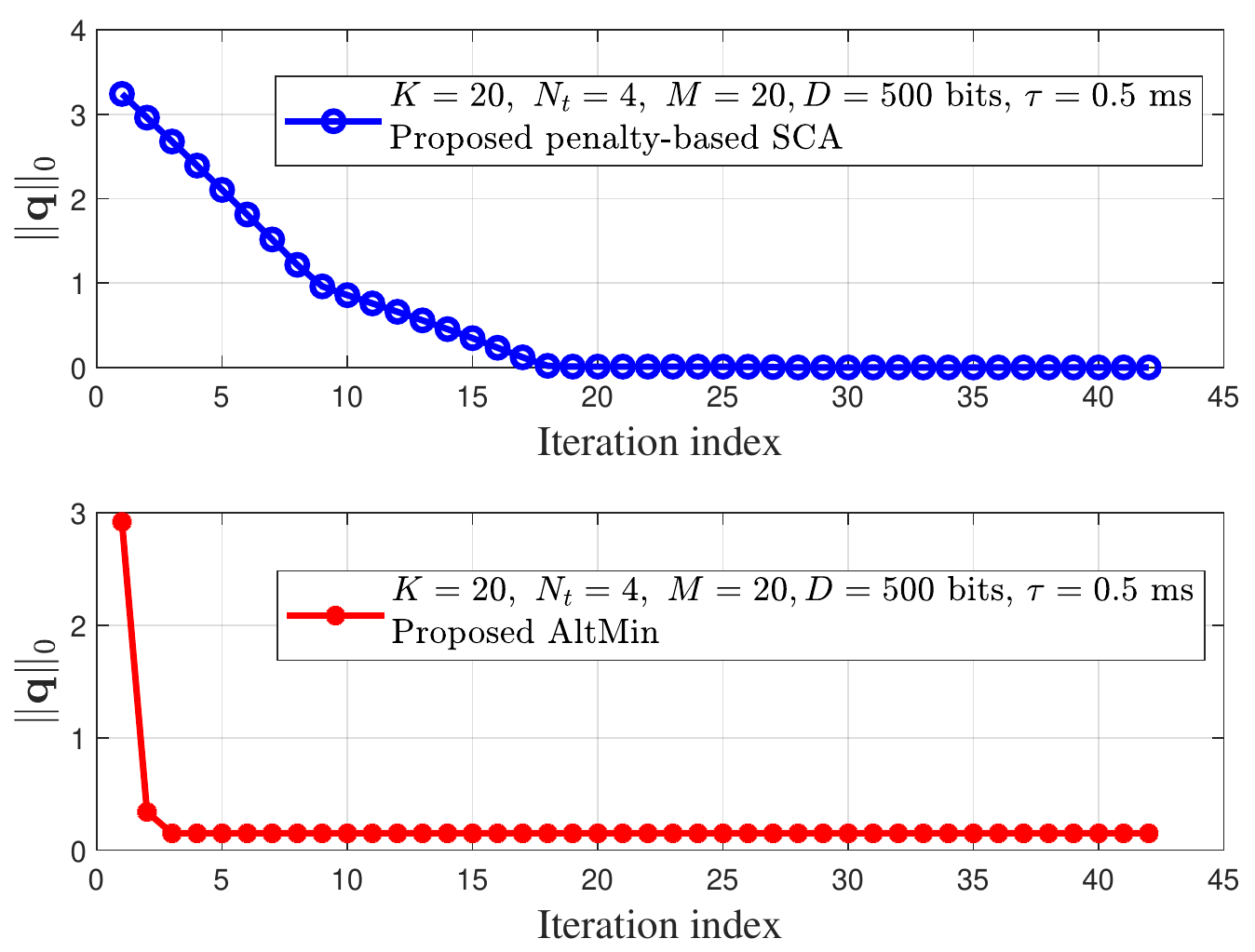}
\caption{Convergence of the proposed algorithms.}
\label{convergence}
\end{figure}

To evaluate the proposed protocol in terms of reliability, we define a performance metric called Probability of Reliable Communication (PRC) as the ratio of the number of experiments where all actuators can successfully decode the message over total experiments. For comparison, we consider the following schemes: 1) \textbf{Upper bound}: we relax the rank-one constraints, i.e., $\mathrm{rank}(\mathbf{W})=1,\mathrm{rank}(\mathbf{V})=1$ and then solve problem $\mathrm{P3}$ until convergence, which serves as a performance upper bound; 2) \textbf{Proposed AltMin algorithm}: the approach is proposed in Section \ref{AltMinSection} for the perfect CSI scenario; 3) \textbf{Proposed penalty-based SCA algorithm}: the method is proposed in Section \ref{SCAsection} for the perfect CSI case; 4) \textbf{Baseline 1}: IRS is deployed for packet transmission, but we adopt random phase shifts; 5) \textbf{Baseline 2}: IRS is not introduced into the system; 6) \textbf{Proposed SDR-based BCD algorithm}: the approach is proposed in Section \ref{worstcasesec} for the imperfect CSI scenario; 7) \textbf{Penalty-based BCD algorithm}: the penalty-based method is used to handle the rank-one constraints for the imperfect CSI case.
\vspace{-3mm}
\subsection{Convergence of the Proposed Algorithms}
We plot the convergence performance of the proposed AltMin algorithm and the proposed penalty-based SCA algorithm for $K=20,N_t=4,M=20,D=500$ bits, $\tau=0.5$ ms. As it can be seen from Fig. \ref{convergence}, both algorithms converge monotonically, and the proposed penalty-based SCA algorithm requires significantly more iterations to converge than the proposed AltMin algorithm. In particular, the proposed AltMin algorithm converges after about 5 iterations on average, while the penalty-based SCA algorithm needs another 30 iterations on average to converge. This is because the number of optimization variables and
constraints in the penalty-based SCA algorithm is much higher than those in the AltMin algorithm, which was discussed in more detail in Section \ref{AltMinSection} and Section \ref{SCAsection}.
\vspace{-3mm}
\subsection{Impact of Packet Size}
Fig. \ref{PRCvsbits} shows the PRC performance versus the packet size $D$ for different schemes when $K=20,N_t=2,M=10$, $\tau=1$ ms. It can be seen that the proposed AltMin algorithm and the penalty-based SCA algorithm can achieve reliable communication within 700 data bits. Moreover, AP and actuators in both proposed algorithms can also communicate reliably at 800 bits with $99\%$ probability. Particularly, the PRC performance of the two proposed algorithms is close to the upper bound scheme, which demonstrates the effectiveness of the proposed algorithms. In contrast, with the random-phase IRS, reliable communication between AP and actuators can be achieved within 500 bits in the baseline 1 scheme. The PRC performance degrades significantly compared to the proposed algorithms with optimal reflective beamforming. This shows the importance of optimizing the reflection matrix, which can greatly improve the quality of the received signal by intelligent reflection. In the baseline 2 scheme, without the support of IRS, the actuators can only communicate reliably within 300 bits, which has worse PRC performance. This is because the received SNR in the first stage is significantly reduced without the help of IRS. As a result, the number of actuators with successful decoding in the first stage decreases significantly, i.e., the number of actuators to relay messages in the second stage decreases. Consequently, the number of actuators that can successfully decode the message in the second stage also decreases. From the comparison of the proposed algorithms with baseline 1 and 2, we know that the integration of IRS with optimized phase shifts into the system play an important role in improving reliability, which confirms the necessity of the IRS in the two-stage protocol.
\begin{figure}[t]
\centering
\includegraphics[width=0.95\linewidth]{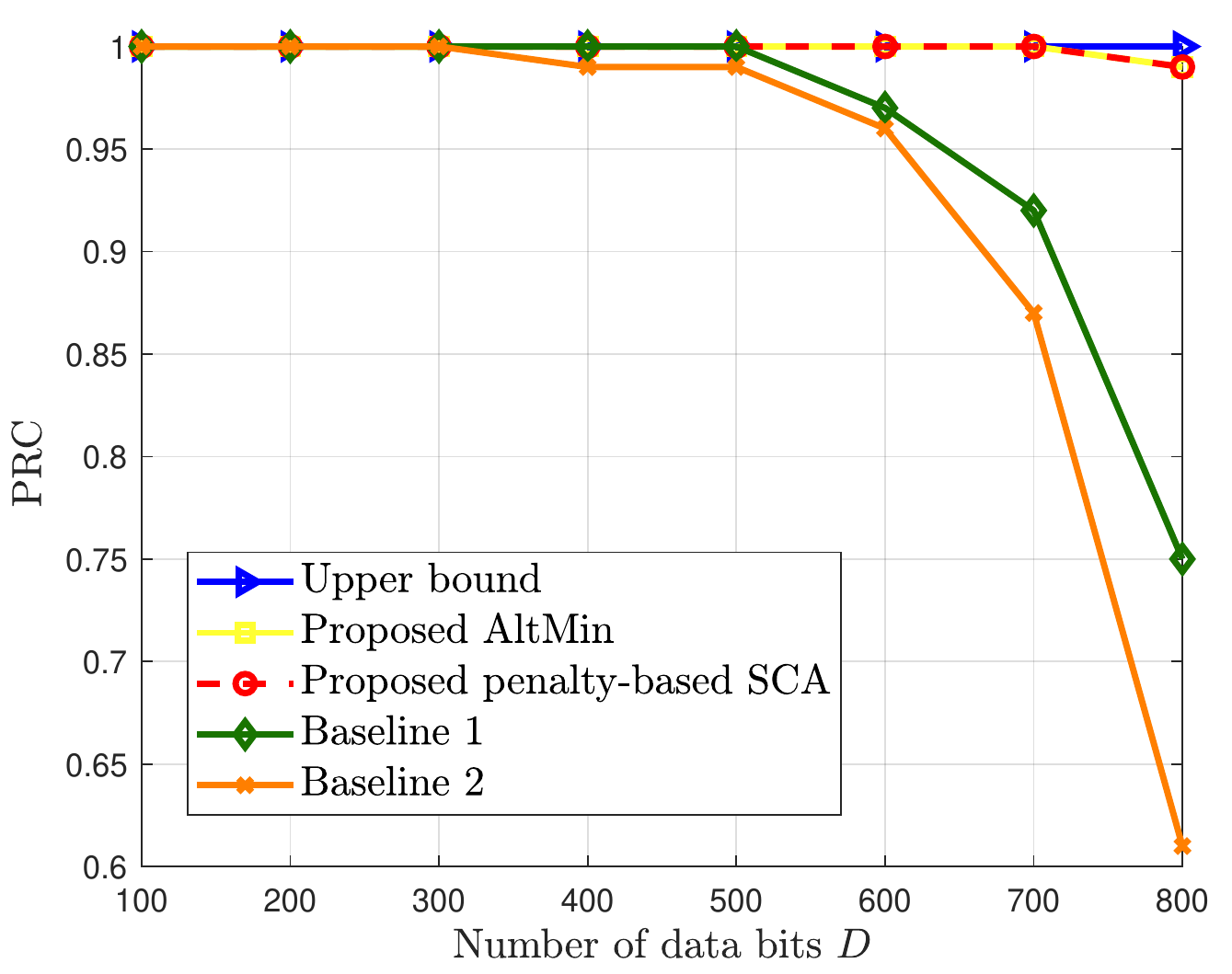}
\caption{PRC performance versus packet size $D$ for different schemes.}
\label{PRCvsbits}
\end{figure}

\begin{figure}[t]
\centering
\includegraphics[width=0.95\linewidth]{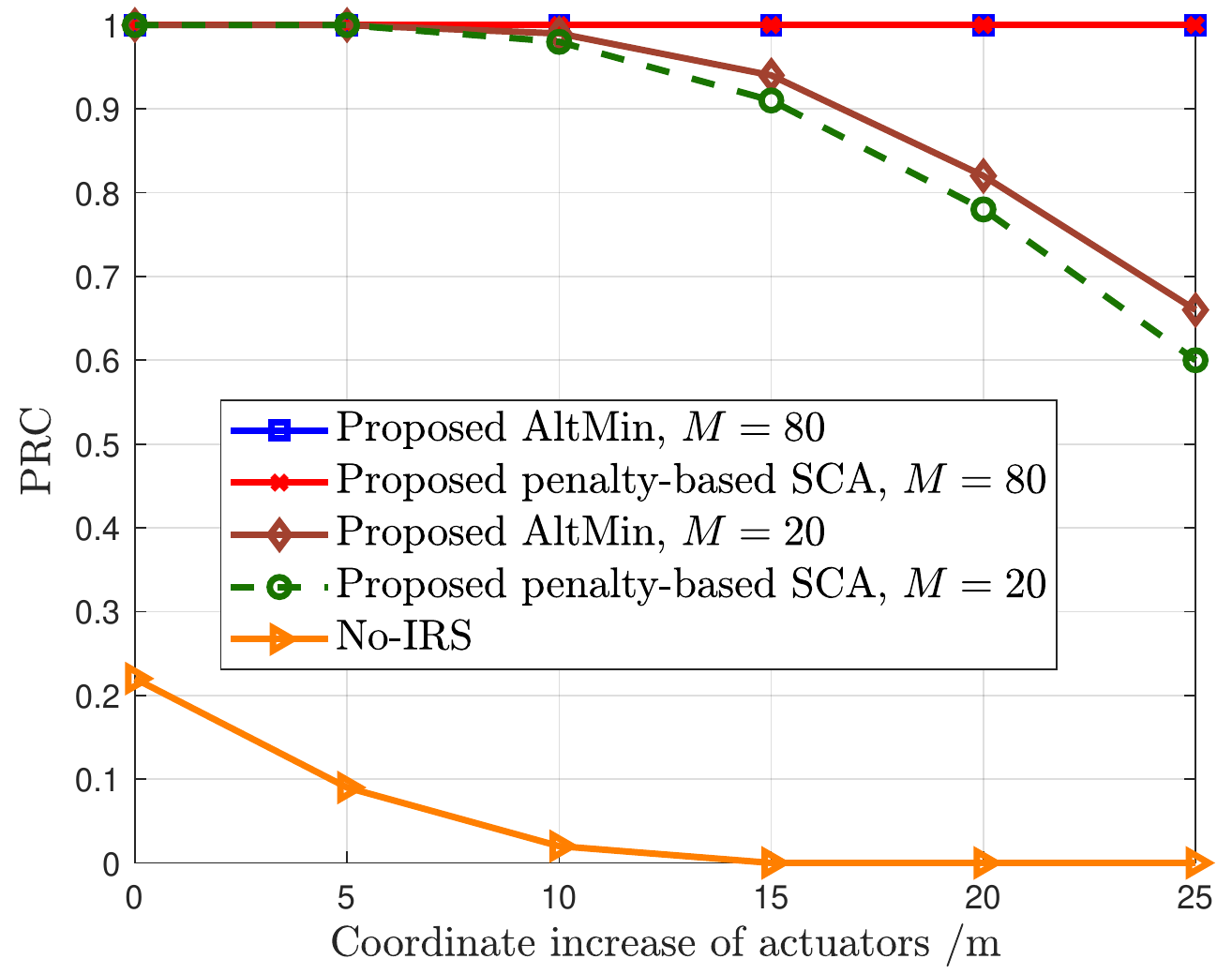}
\caption{Necessity of IRS on the PRC performance for different location of actuators.}
\label{PRCvslocation}
\end{figure}
\vspace{-3mm}
\subsection{Necessity of IRS}
To clarify the necessity of IRS in a two-stage communication protocol, we investigate the effect of distance variation between the AP and actuators on the PRC performance by comparing the proposed algorithms and the No-IRS scheme for $K=10,N_t=2,D=900$ bits, $\tau=1$ ms. We set the location of actuators mentioned above as reference points, i.e., all actuators are randomly and uniformly distributed on a circle centered at (50,50,0) m with a radius of 20 m. Then in order to represent the distance variation between the AP and actuators, we increase the horizontal and vertical coordinates of the actuators at the same time. From Fig. \ref{PRCvslocation}, we can observe that when the distance between the AP and actuators is large, the actuators cannot communicate reliably with the AP under the No-IRS scheme. This is because of the poor channel quality of the direct link between the AP and the actuators. After the deployment of IRS into the communication protocol, by increasing the number of IRS reflecting elements, actuators can successfully receive the messages sent by the AP even when they are far from each other.
\vspace{-3mm}
\subsection{Impact of Second-stage D2D Transmission}
We now examine the role of the second-stage D2D transmission. For this purpose, we plot the average number of actuators with successful decoding in each stage and the total number in two stages under the proposed AltMin algorithm and the baseline 2 scheme for different packet size D. From Fig. \ref{D2DROLE}, it can be seen that the average number of actuators with successful decoding in the first stage decreases as the packet size $D$ increases. Accordingly, the remaining actuators with decoding failures must rely on the second-stage D2D transmission to achieve reliable communication. In particular, the D2D network begins to function at $D>200$
bits in the proposed AltMin algorithm and at $D>100$ bits in the baseline 2 (i.e., No-IRS). Moreover, when the size of the data packets is larger, the second-stage D2D transmission plays a greater role. In addition, we find that when the data packet becomes larger (taking 800 bits as an example), about $30\%$ of the actuators in the proposed algorithm rely on the D2D network to achieve reliable communication, while in the baseline 1 scheme, the number of users who need to rely on D2D network to realize reliable communication is up to about $70\%$. This shows that the IRS-assisted first-stage transmission can effectively reduce the communication load of the second-stage D2D network.
\begin{figure}[!t]\centering
	\subfloat[Proposed AltMin algorithm.]{
    \label{RISD2D}
    \includegraphics[width=.95\linewidth]{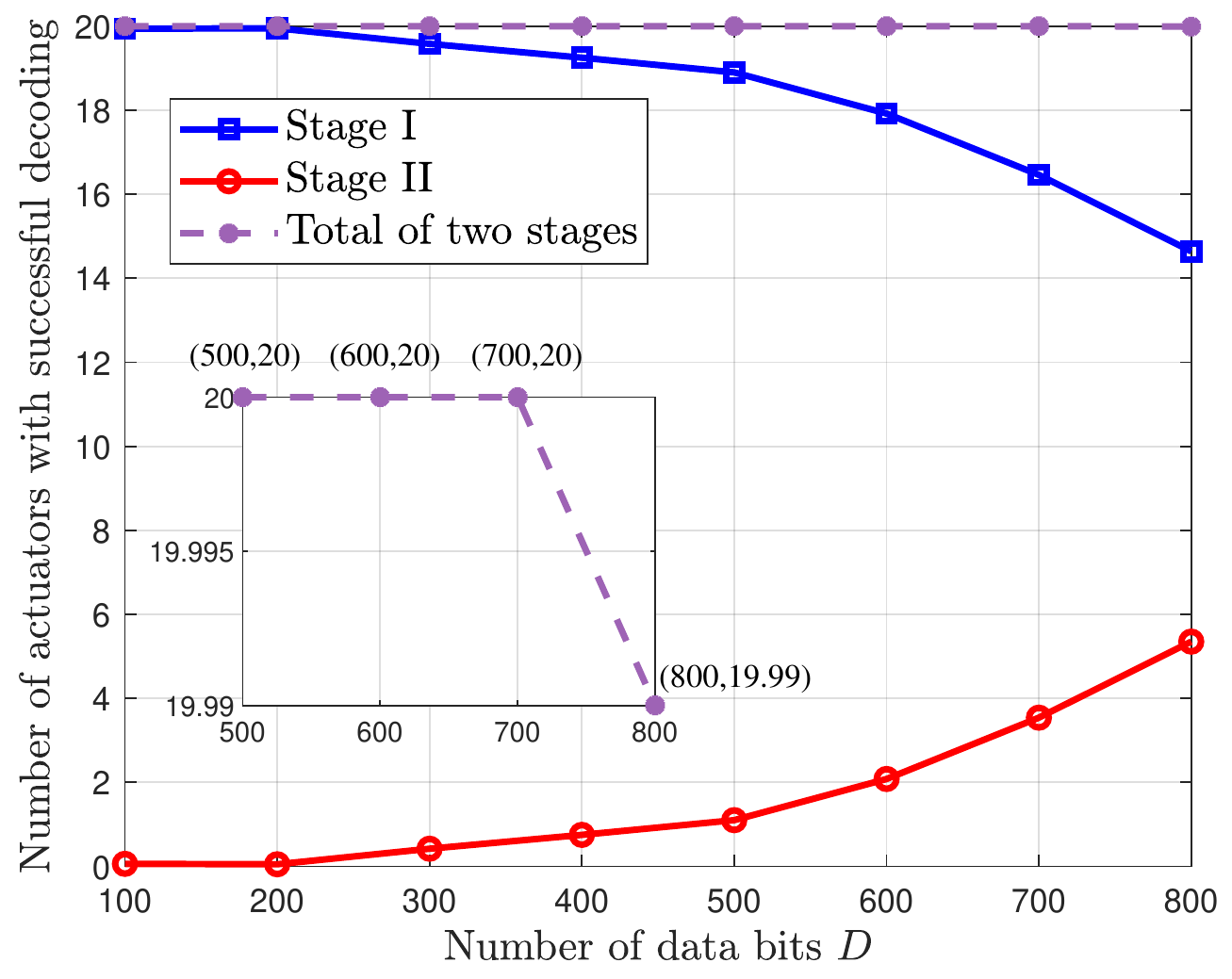}}\\[0.01mm]
	\subfloat[Baseline 2.]{
    \label{NORISD2D}
    \includegraphics[width=.95\linewidth]{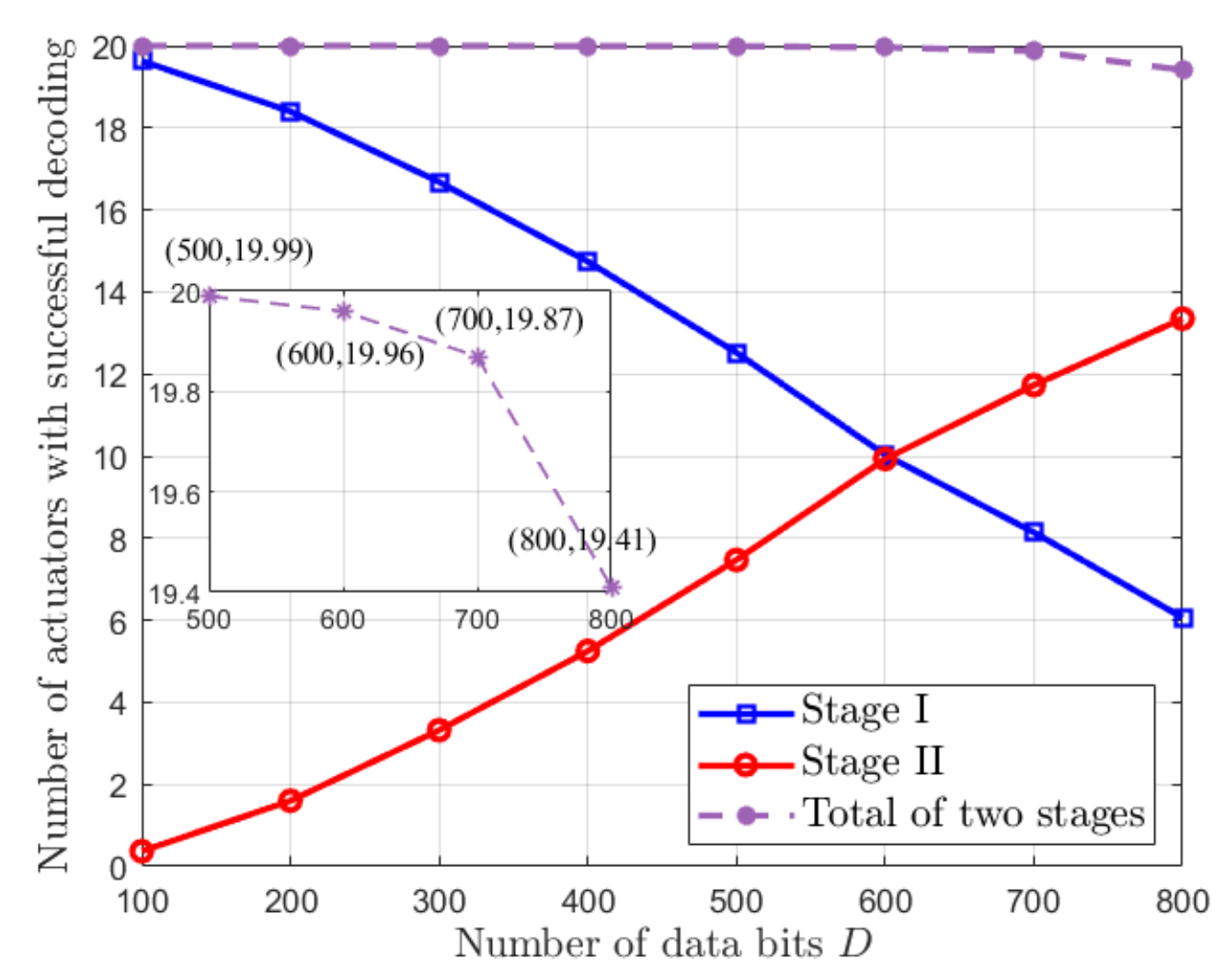}}
	\caption{Average number of actuators with successful decoding in each stage and the total number in two stages under the proposed AltMin algorithm and baseline 2 scheme for different $D$.}
	\label{D2DROLE}
\end{figure}
\vspace{-3mm}
\subsection{Impact of Delay}
In Fig. \ref{PRCvstau}, we study the effects of delay on the PRC performance for different schemes when $K=20,N_t=4,M=20,D=500$ bits. It can be observed that the baseline 1 and 2 schemes are sensitive to delay and their PRC performance is poor, especially for strict latency requirements. The proposed algorithms, on the other hand, can ensure reliable communication even for small delay (e.g. $\tau=0.5$ ms). This is due to the doubly improved reliability of the proposed two-stage protocol through the combined use of IRS and the D2D network. Thus, the proposed two-stage IRS-aided D2D communication protocol is crucial and better copes with URLLC-oriented applications.
\vspace{-3mm}
\subsection{Impact of Number of Reflecting Elements}\label{impactM}
Fig. \ref{PRCvsM} depicts the PRC performance versus the number of reflecting elements $M$ for different $K$ and schemes when $N_t=2,D=900$ bits, $\tau=1$ ms. It can be seen that the proposed algorithms perform almost the same and their PRC performance approaches the upper bound. When the number of reflecting elements $M$ increases, the PRC performance improves and reliable communication of all actuators can be guaranteed. Although IRS is theoretically passive, i.e., it does not actively send and receive signals, it also requires a power supply to maintain the operation of each reflecting element and the intelligent controller. So in this case, 20 reflecting elements are sufficient for all actuators to successfully decode the 900-bit packet within 1 ms, which can provide some insight into the practical application. Furthermore, compared to baseline 1, the proposed algorithms have a large performance gap, which becomes smaller as the number of reflecting elements increases. This is because more reflecting elements can provide higher spatial degree of freedom (DoF), significantly improving the received signal. In particular, the PRC performance for $K=20$ is better than that for $K=10$, which benefits from the multiuser diversity and the second-stage D2D transmission between actuators with proximity to each other. At the same time, it is worth noting that the message that AP sends in the first stage and the successful-decoding actuators relay in the second stage is the same combined message, so there is no interference between the actuators. More specifically, the number of
actuators that can successfully decode the signal in the first stage is higher for $K=20$ than for $K=10$, i.e., the number of actuators that act as relays to relay messages in the second stage becomes larger, increasing the probability of successful reception of the remaining actuators.

\begin{figure}[t]
\centering
\includegraphics[width=0.95\linewidth]{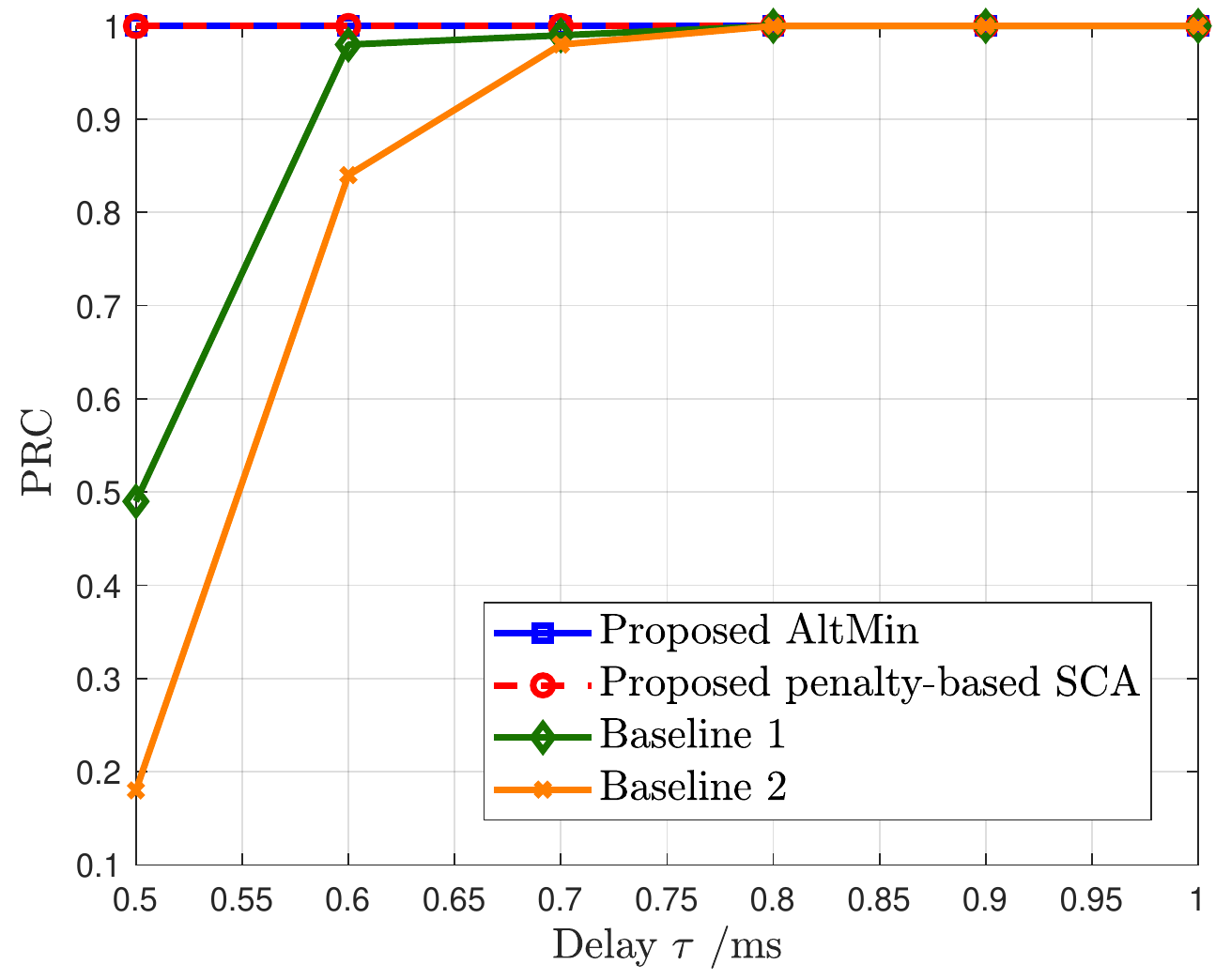}
\caption{PRC performance versus delay $\tau$ for different schemes.}
\label{PRCvstau}
\end{figure}

\begin{figure}[t]
\centering
\includegraphics[width=0.95\linewidth]{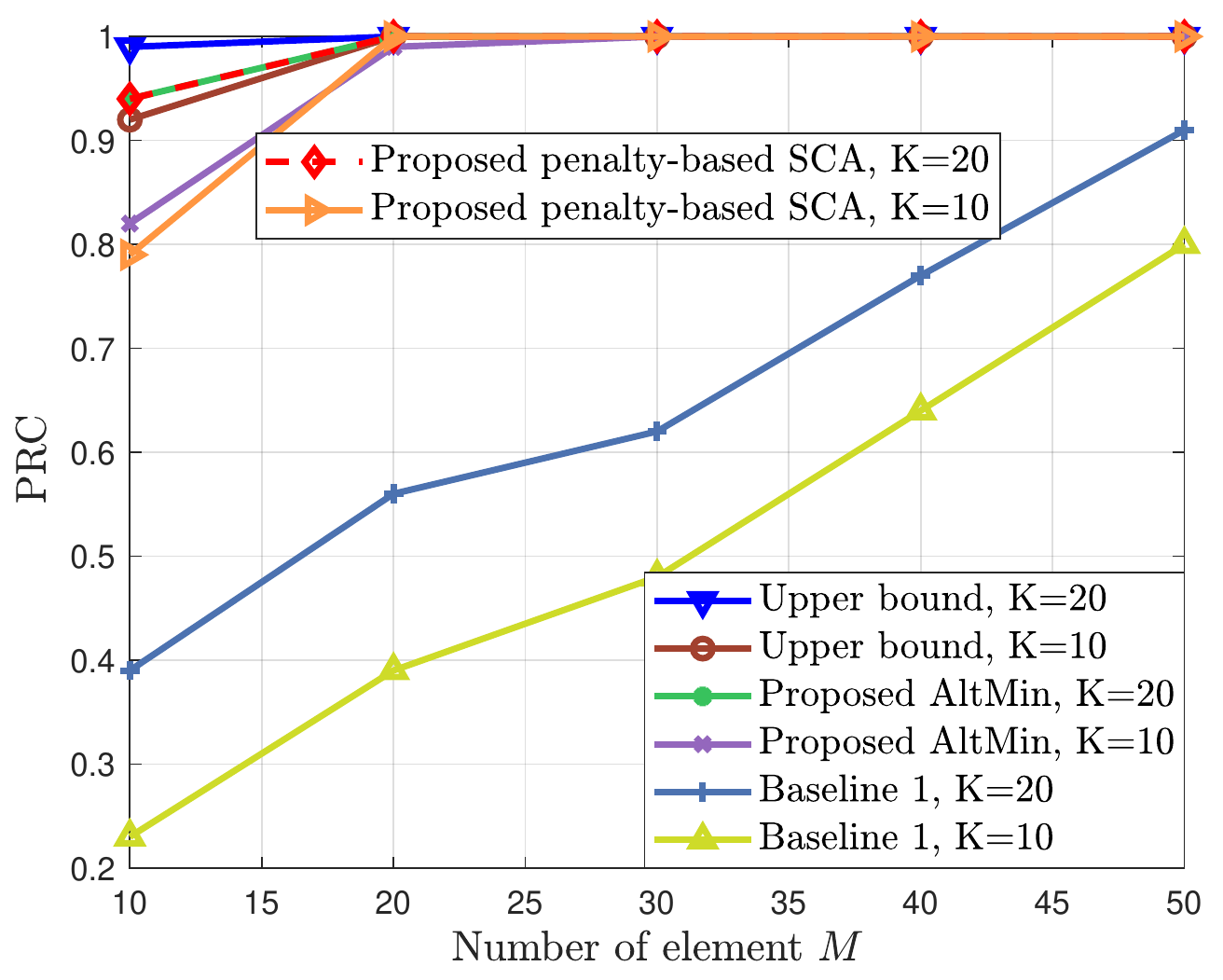}
\caption{PRC performance versus number of reflecting elements $M$ for different $K$ and schemes.}
\label{PRCvsM}
\end{figure}
\vspace{-3mm}
\subsection{Impact of Number of Antennas}
Fig. \ref{PRCvsNt} shows the effects of the number of antennas $N_t$ on the PRC performance for different $K$ and schemes when $M=10,D=900$ bits, $\tau=1$ ms. As it can be seen from Fig. \ref{PRCvsNt}, the probability of reliable communication between AP and actuators improves as the number of antennas increases since more antennas provide a higher diversity gain. The PRC performance of the proposed AltMin algorithm is slightly better than that of the penalty-based SCA algorithm, and the performance gap between the proposed algorithms and the upper bound narrows as $N_t$ increases. The reason why the PRC performance for $K=20$ is better than that for $K=10$ was discussed in Section \ref{impactM}. In addition, when we compare the proposed algorithms with baseline 1, we also find that as the number of antennas increases, the received signal of the direct link becomes stronger and the influence of the reflection matrix optimization becomes weaker. That is, reliable communication via the random phase IRS can be guaranteed if the number of antennas is sufficiently large, but it is achieved at the expense of the high cost of maintaining the antenna arrays. This demonstrates the necessity of using IRS with well-optimized phase shifts.
\vspace{-3mm}
\subsection{Impact of CSI Uncertainty}
In Fig. \ref{PRCvskappa}, we investigate the impact of CSI uncertainty on the PRC performance for different schemes when $K=10,N_t=4,M=10,D=400$ bits, $\tau=1$ ms. It can be seen that the PRC performance deteriorates when the accuracy of the CSI estimate decreases. This is because the difficulty of performing accurate active beamforming at AP and reflective beamforming at IRS increases with the poorer quality of the CSI estimate, resulting in poorer PRC performance. The proposed SDR-based BCD algorithm outperforms the penalty-based BCD algorithm and baseline 1 (i.e., IRS with random phase shifts) over the entire range of the considered CSI uncertainty ratio, which shows that the proposed algorithm can fully and efficiently exploit the spatial DoF to improve the reliability even when the CSI uncertainty exists.The proposed algorithm can guarantee reliable communication with a probability of $98\%$, even in the presence of large CSI estimation errors. This confirms the robustness of the proposed algorithm to CSI uncertainty.
\begin{figure}[t]
\centering
\includegraphics[width=0.95\linewidth]{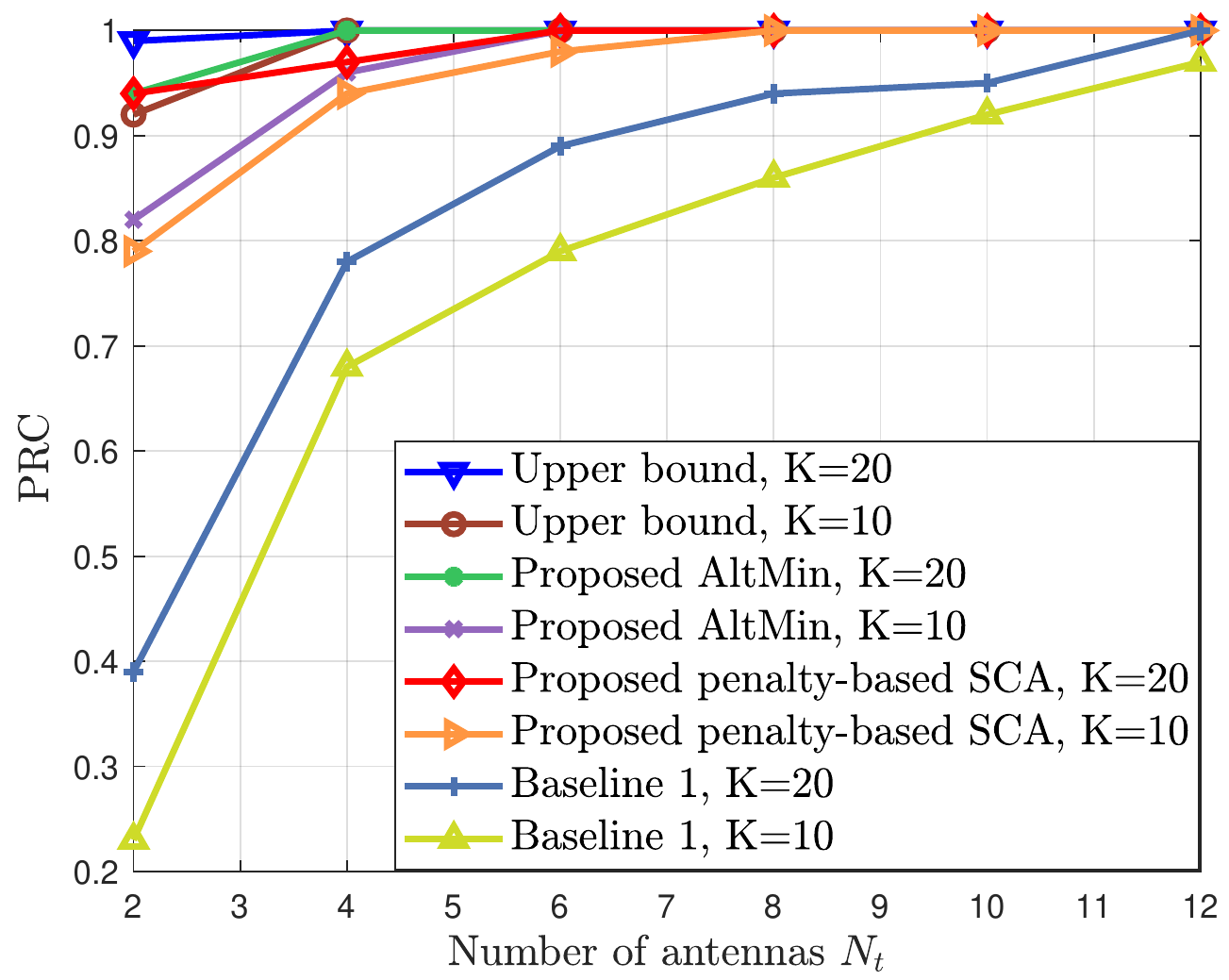}
\caption{PRC performance versus number of antennas $N_t$ for different $K$ and schemes.}
\label{PRCvsNt}
\end{figure}

\begin{figure}[t]
\centering
\includegraphics[width=0.95\linewidth]{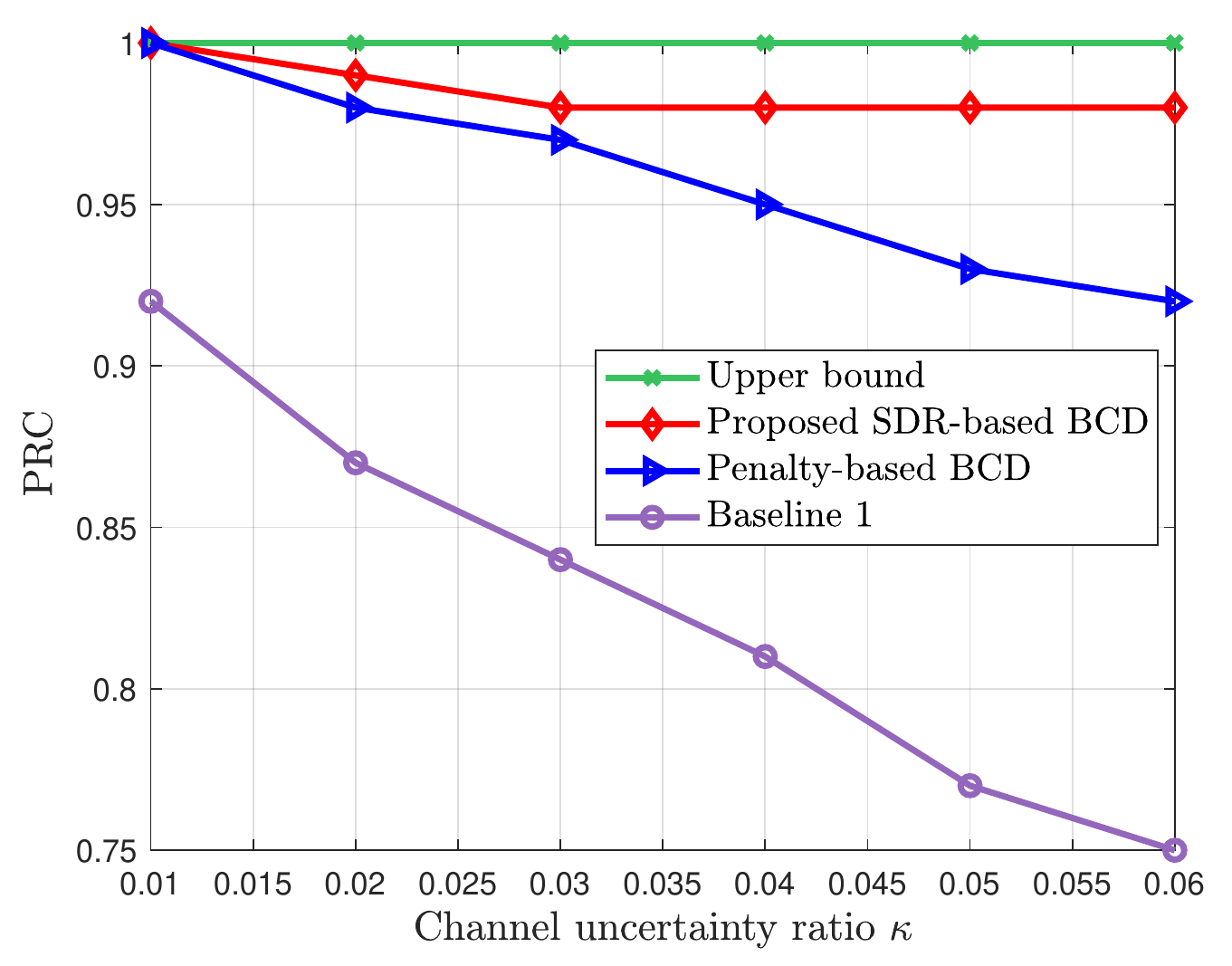}
\caption{PRC performance versus channel uncertainty ratio $\kappa$ for different schemes.}
\label{PRCvskappa}
\end{figure}
\section{Conclusion}\label{conclusionsec}
In this work, we exploited the potential of IRS and D2D communication to enable URLLC between an AP and multiple actuators in the IIoT scenario. We proposed a two-stage protocol where MRC is adopted for joint decoding the superimposed signal (direct signal from AP and reflected signal from IRS) in the first stage and the signal from D2D links in the second stage. The optimization problem was formulated to maximize the number of actuators with successful reception by jointly optimizing the active beamforming at AP and the phase shifts at IRS. We studied the joint beamforming problem under the scenarios of perfect and imperfect CSI of both direct AP-actuator channels and cascaded AP-IRS-actuator channels, where efficient algorithms with complexity and convergence analysis were proposed for each case. Simulation results confirmed the role and importance of IRS and D2D communication in enhancing reliability compared to other baseline schemes. Thanks to the doubly enhanced reliability via IRS and D2D network, the proposed two-stage protocol can achieve reliable communication even under stringent latency requirements and even in the presence of CSI uncertainties.
\vspace{-2mm}
\begin{appendices}
  \section{Proof of Proposition \ref{Proposition1}}\label{appendixA}
  For problem $\mathrm{SCA\!-\!P4}$, we denote the objective function by $f(\mathbf{W,V})$ and let $\mathbf{W}^{\star},\mathbf{V}^{\star}$ be its optimal solutions. Hence, we have $f\left(\mathbf{W}^{\star},\mathbf{V}^{\star}\right) \leq f(\mathbf{W,V})$. Denote the objective function of problem $\mathrm{SCA\!-\!P5}$ by $g(\mathbf{W,V};\rho)$ and its optimal solutions by $\mathbf{W}_{s},\mathbf{V}_{s}$ for penalty factor $\rho_{s}$. Thus, we have $g\left(\mathbf{W}_{s},\mathbf{V}_{s};\rho_{s}\right) \leq g\left(\mathbf{W}^{\star},\mathbf{V}^{\star} ; \rho_{s}\right)$, i.e.,
\begin{align}\label{C1}
\setlength\abovedisplayskip{4pt}
\setlength\belowdisplayskip{4pt}
  &f\left(\mathbf{W}_{s},\mathbf{V}_{s}\right)+\frac{1}{\rho_{s}}\left(\left\|\mathbf{W}_{s}\right\|_{*}
  -\left\|\mathbf{W}_{s}\right\|_{2}+\left\|\mathbf{V}_{s}\right\|_{*}-\left\|\mathbf{V}_{s}\right\|_{2}\right)
  \notag\\
  &\leq f\left(\mathbf{W}^{\star},\mathbf{V}^{\star}\right)\!+\!\frac{1}{\rho_{s}}\left(
  \left\|\mathbf{W}^{\star}\right\|_{*}\!-\!\left\|\mathbf{W}^{\star}\right\|_{2}\!+\!
  \left\|\mathbf{V}^{\star}\right\|_{*}\!-\!\left\|\mathbf{V}^{\star}\right\|_{2}\right)\notag\\
  &\overset{(a)}{=}f\left(\mathbf{W}^{\star},\mathbf{V}^{\star}\right),
\end{align}
where $(a)$ holds as $\mathbf{W}^{\star},\mathbf{V}^{\star}$ are the optimal solutions of problem $\mathrm{SCA\!-\!P4}$, which satisfy the rank-one constraints, i.e., $\left\|\mathbf{W}^{\star}\right\|_{*}-\left\|\mathbf{W}^{\star}\right\|_{2}=0,
\left\|\mathbf{V}^{\star}\right\|_{*}-\left\|\mathbf{V}^{\star}\right\|_{2}$. From \eqref{C1}, we have
\begin{align}\label{C2}
\setlength\abovedisplayskip{4pt}
\setlength\belowdisplayskip{4pt}
  &\left\|\mathbf{W}_{s}\right\|_{*}
  -\left\|\mathbf{W}_{s}\right\|_{2}+\left\|\mathbf{V}_{s}\right\|_{*}-\left\|\mathbf{V}_{s}\right\|_{2} \notag\\
  &\leq\rho_{s}\left[f\left(\mathbf{W}^{\star},\mathbf{V}^{\star}\right)-
  f\left(\mathbf{W}_{s}\mathbf{V}_{s}\right)\right].
\end{align}

Denote $\overline{\mathbf{W}},\overline{\mathbf{V}}$ by limit points of sequence $\left\{\mathbf{W}_{s},\mathbf{V}_{s}\right\}$ and there is an infinite subsequence $\mathcal{S}$ such that $\lim\limits_{s \in \mathcal{S}} \mathbf{W}_{s}=\overline{\mathbf{W}},\lim\limits_{s \in \mathcal{S}} \mathbf{V}_{s}=\overline{\mathbf{V}}$. By taking the limit as $s \rightarrow \infty$ for $ s \in \mathcal{S}$ on both sides of \eqref{C2}, we can obtain that
\begin{equation}\label{C3}
\setlength\abovedisplayskip{2pt}
\setlength\belowdisplayskip{2pt}
\begin{aligned}
&\lim_{s \in \mathcal{S}}\left(\left\|\mathbf{W}_{s}\right\|_{*}-\left\|\mathbf{W}_{s}\right\|_{2}
+\left\|\mathbf{V}_{s}\right\|_{*}-\left\|\mathbf{V}_{s}\right\|_{2}\right) \notag\\ &\overset{(b)}{=}\|\overline{\mathbf{W}}\|_{*}-\|\overline{\mathbf{W}}\|_{2}+\|\overline{\mathbf{V}}\|_{*}
-\|\overline{\mathbf{V}}\|_{2}  \\
&\leq \lim _{s \in \mathcal{S}} \rho_{s}\left[f\left(\mathbf{W}^{\star},\mathbf{V}^{\star}\right)-f\left(\mathbf{W}_{s},\mathbf{V}_{s}\right)\right] \overset{\rho_{s} \rightarrow 0}{=} 0,
\end{aligned}
\end{equation}
where $(b)$ holds owing to the continuity of the function $\|\mathbf{X}\|_{*}-\|\mathbf{X}\|_{2}$. Combing \eqref{C3} with $\|\overline{\mathbf{W}}\|_{*}-\|\overline{\mathbf{W}}\|_{2}\geq 0, \|\overline{\mathbf{V}}\|_{*}
-\|\overline{\mathbf{V}}\|_{2}\geq 0$, we have $\|\overline{\mathbf{W}}\|_{*}-\|\overline{\mathbf{W}}\|_{2}=0,\|\overline{\mathbf{V}}\|_{*}
-\|\overline{\mathbf{V}}\|_{2}=0$, so $\overline{\mathbf{W}},\overline{\mathbf{V}}$ are feasible for problem $\mathrm{SCA\!-\!P4}$. Moreover, by taking the limit as $s \rightarrow \infty$ for $s \in \mathcal{S}$ in \eqref{C1}, we have
\begin{align}
\setlength\abovedisplayskip{4pt}
\setlength\belowdisplayskip{4pt}
  &\!\!f(\overline{\mathbf{W}},\overline{\mathbf{V}}) \!\!\overset{(c)}{\leq}\!\! f(\overline{\mathbf{W}},\overline{\mathbf{V}})\!\!+\!\!\lim\limits _{s \in \mathcal{S}} \frac{1}{ \!\rho_{s}}\!\left(\left\|\mathbf{W}_{s}\right\|_{*}
  \!\!-\!\!\left\|\mathbf{W}_{s}\right\|_{2}\!\!+\!\!\left\|\mathbf{V}_{s}\right\|_{*}
  \!\!-\!\!\left\|\mathbf{V}_{s}\right\|_{2}\right)\notag\\
  &\leq f\left(\mathbf{W}^{\star},\mathbf{V}^{\star}\right),
\end{align}
where $(c)$ holds as the penalty factor $\rho_{s}$ and $\left\|\mathbf{W}_{s}\right\|_{*}
  -\left\|\mathbf{W}_{s}\right\|_{2}$, $\left\|\mathbf{V}_{s}\right\|_{*}-\left\|\mathbf{V}_{s}\right\|_{2}$
are nonnegative. Note that $\overline{\mathbf{W}},\overline{\mathbf{V}}$ are feasible for problem $\mathrm{SCA\!-\!P4}$ and its objective function $f(\overline{\mathbf{W}},\overline{\mathbf{V}})$ is no larger than $f\left(\mathbf{W}^{\star},\mathbf{V}^{\star}\right)$ obtained from optimal solutions $\mathbf{W}^{\star},\mathbf{V}^{\star}$ of problem $\mathrm{SCA\!-\!P4}$. Thus, $\overline{\mathbf{W}},\overline{\mathbf{V}}$ are also optimal to problem $\mathrm{SCA\!-\!P4}$. This completes the proof.
\end{appendices}

{\smaller[1]
	\bibliographystyle{IEEEtran}
	\bibliography{Reference}}

\begin{IEEEbiography}[{\includegraphics[width=1in,height=1.25in,clip,keepaspectratio]{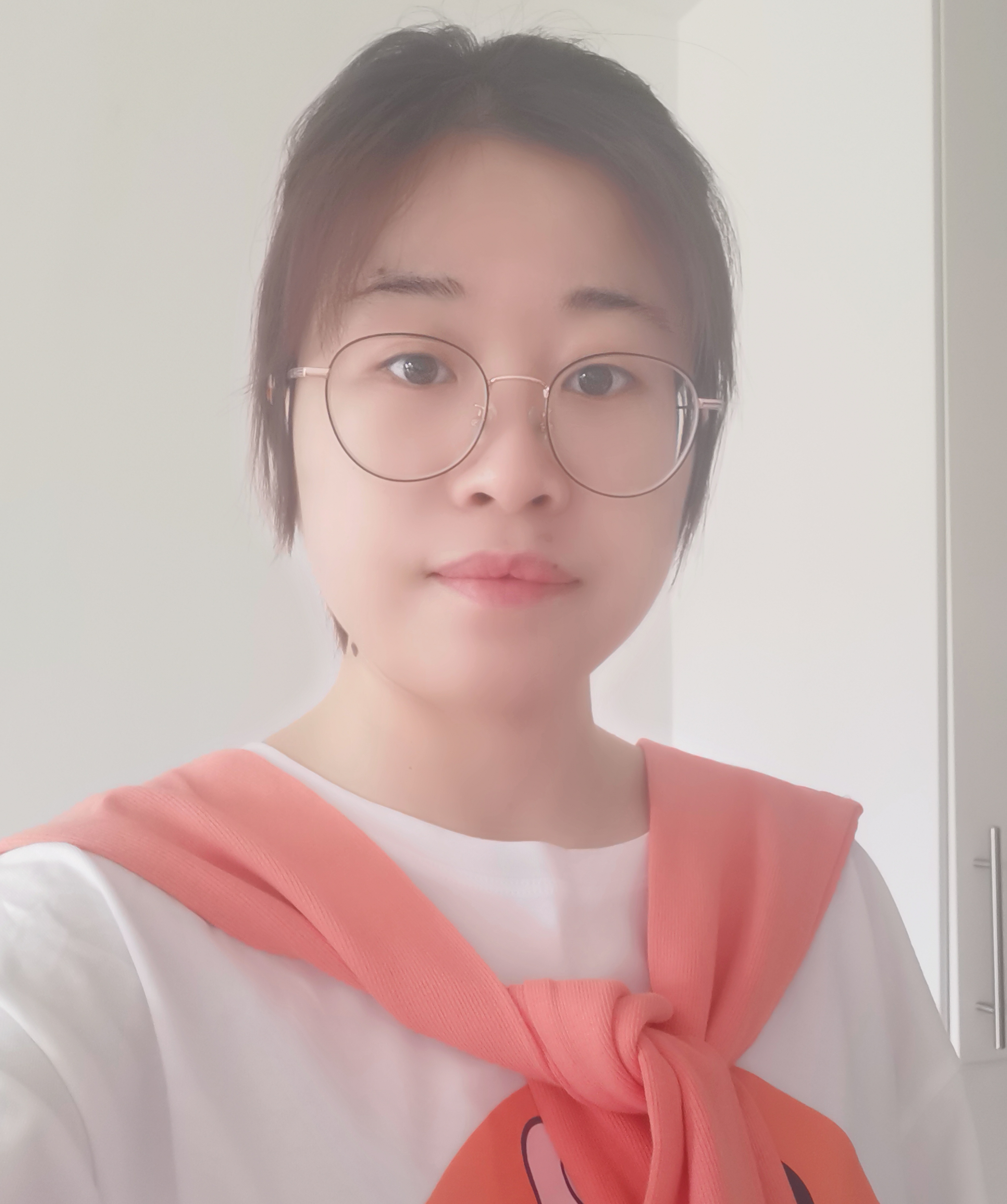}}]{Jing Cheng}
received the B.Eng. degree in communication engineering from Jiangsu University of Science and Technology, Zhenjiang, China, in 2017. She is currently pursuing the Ph.D. degree in information and communication engineering from the State Key Laboratory of Rail Traffic Control and Safety, Beijing Jiaotong University, Beijing, China. From April 2021 to April 2022, she was a visiting Ph.D. student at the Link\"oping University, Norrk\"oping, Sweden. Her main research interests include physical-layer resource allocation, ultra-reliable and low-latency communications (URLLC), and intelligent reflecting surface (IRS)-assisted communication.
\end{IEEEbiography}

\begin{IEEEbiography}[{\includegraphics[width=1in,height=1.25in,clip,keepaspectratio]{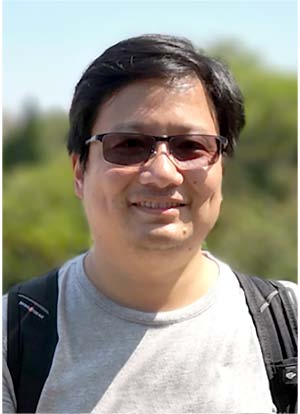}}]{Chao Shen}
(S'12-M'16) received the B.S. degree in communication engineering and the Ph.D. degree in signal and information processing from Beijing Jiaotong University (BJTU), Beijing, China, in 2003 and 2012, respectively. He was a Visiting Scholar with the University of Maryland at College Park, College Park, MD, USA, from 2014 to 2015, and the Chinese University of Hong Kong, Shenzhen, from 2017 to 2018. He has been an Associate Professor with the State Key Laboratory of Rail Traffic Control and Safety, BJTU since 2012. He has been working as a Senior Research Scientist with the Shenzhen Research Institute of Big Data, Shenzhen, China, since 2022.
His current research interests include large-scale network optimization, ultrareliable and low-latency communication (URLLC), and integrated sensing and communication (ISAC) for B5G/6G communications.
\end{IEEEbiography}

\begin{IEEEbiography}[{\includegraphics[width=1in,height=1.25in,clip,keepaspectratio]{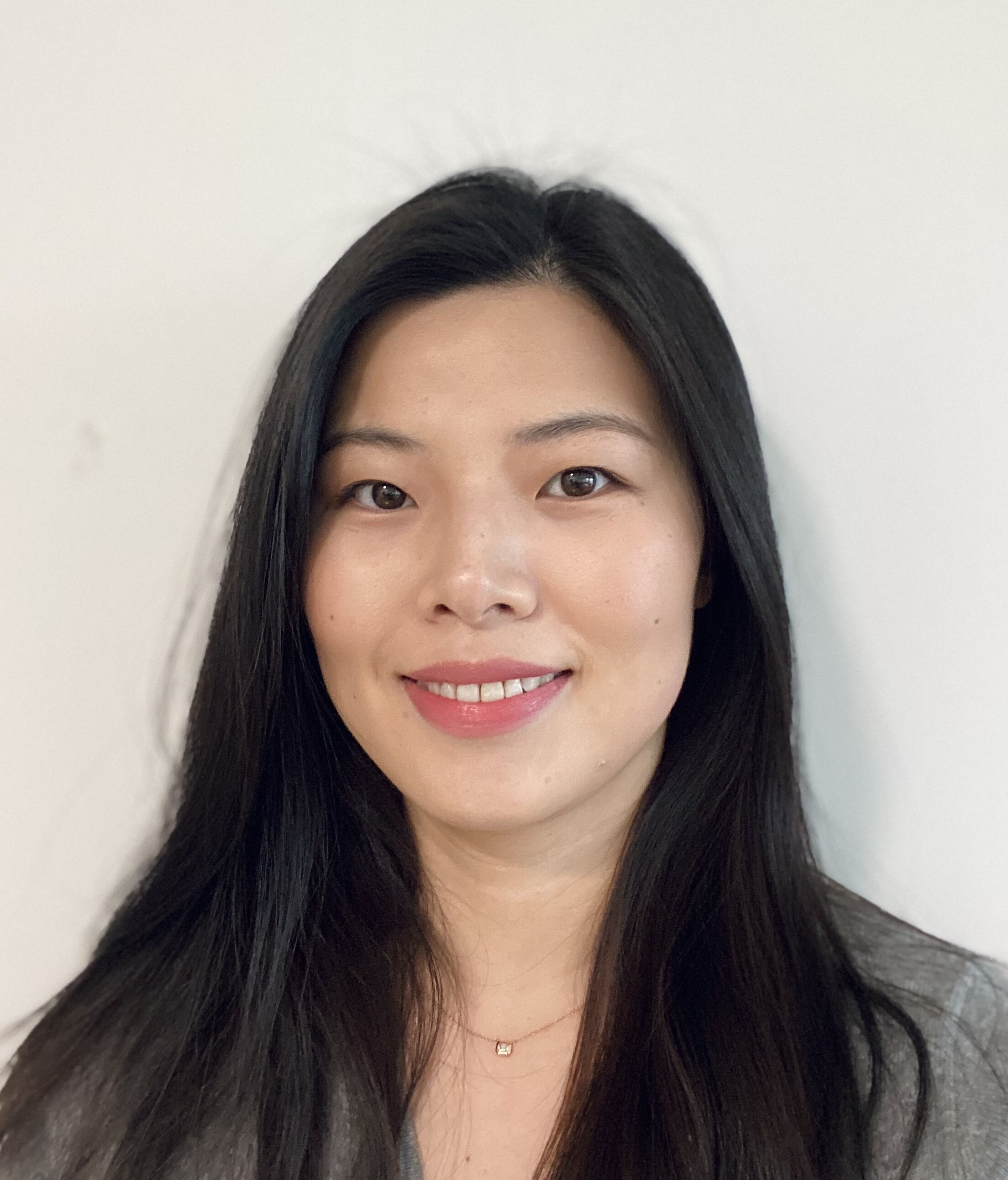}}]{Zheng Chen}(Member, IEEE) is an Assistant Professor with the Department of Electrical Engineering, Link\"oping University, Sweden. She received the B.Sc. degree from Huazhong University of Science and Technology (HUST), China, in 2011. Then she received the M.Sc. and Ph.D. degrees from CentraleSup\'elec, Universit\'e Paris-Saclay, France, in 2013 and 2016, respectively. From June to November 2015, she was a visiting scholar at Singapore University of Technology and Design (SUTD), Singapore. Since January 2017, she has been with Link\"{o}ping University, Sweden. Her main research interests include wireless communications, distributed intelligent systems, and network science.

She was the recipient of the 2020 IEEE Communications Society Young Author Best Paper Award. She was selected as an Exemplary Reviewer for IEEE Communications Letters in 2016, for IEEE Transactions on Wireless Communications in 2017, and for IEEE Transactions on Communications in 2019. She served as the workshop co-chair of the IEEE GLOBECOM Workshop on Wireless Communications for Distributed Intelligence in 2021 and 2022.
\end{IEEEbiography}

\begin{IEEEbiography}[{\includegraphics[width=1in,height=1.25in,clip,keepaspectratio]{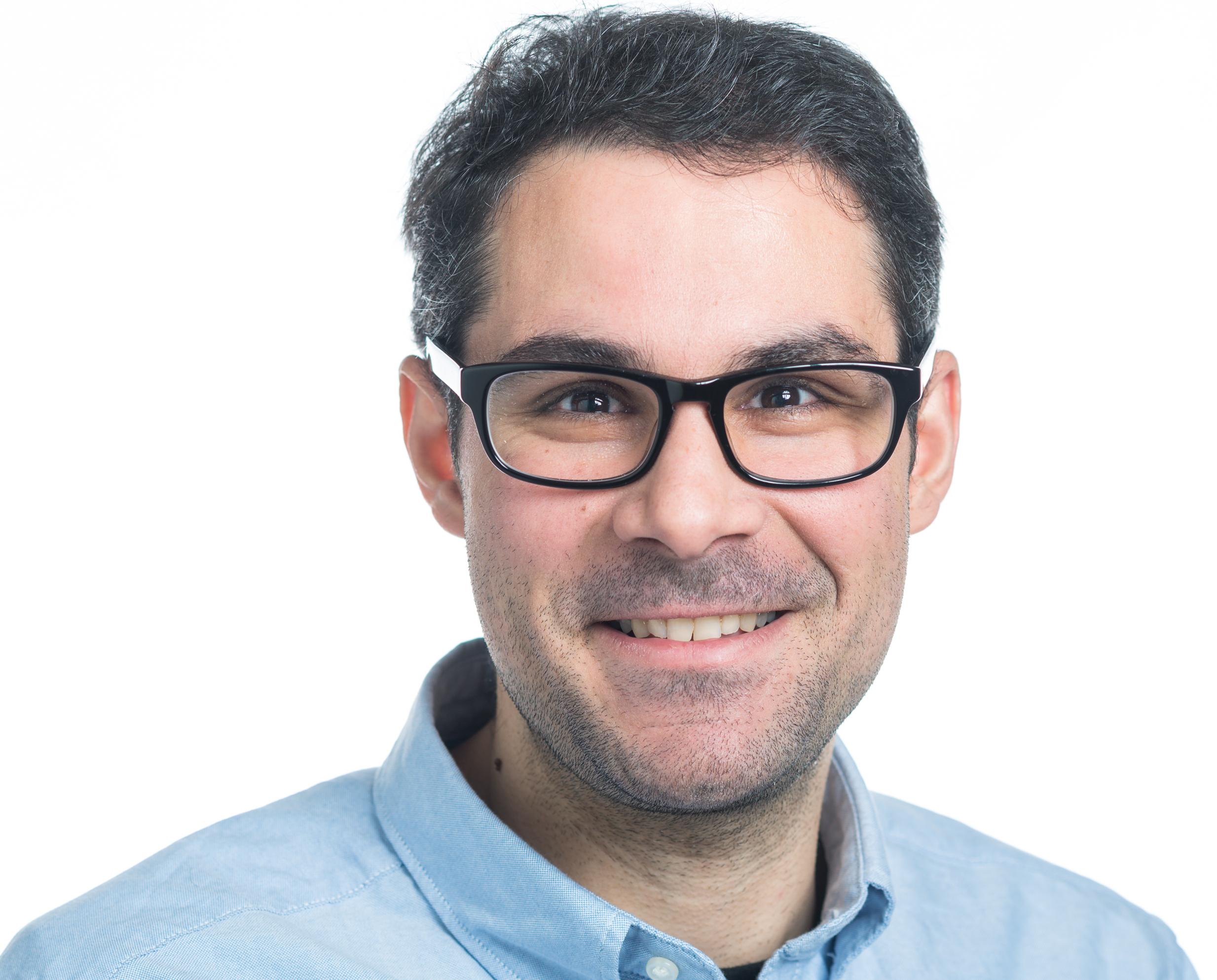}}]{Nikolaos Pappas}(Senior Member, IEEE) is an Associate Professor with the Department of Computer and Information Science, Link\"oping University, Sweden. He received the B.Sc. degree in computer science, the B.Sc. degree in mathematics, the M.Sc. degree in computer science, and the Ph.D. degree in computer science from the University of Crete, Greece, in 2005, 2012, 2007, and 2012, respectively. From 2005 to 2012, he was a Graduate Research Assistant with the Telecommunications and Networks Laboratory, Institute of Computer Science, Foundation for Research and Technology Hellas, and a Visiting Scholar with the Institute of Systems Research, University of Maryland at College Park, College Park, MD, USA. From 2012 to 2014, he was a Postdoctoral Researcher with the Department of Telecommunications, Supelec. His main research interests include the field of wireless communication networks with emphasis on the semantics-aware communications, energy harvesting networks, network-level cooperation, age of information, and stochastic geometry. From 2013 to 2018, he was an Editor of the IEEE Communications Letters. He was a guest editor for the IEEE Internet of Things Journal on ``Age of Information and Data Semantics for Sensing, Communication and Control Co-Design in IoT''. He is currently an Editor of the IEEE Transactions on Communications, the IEEE/KICS Journal of Communications and Networks, the IEEE Open Journal of Communications Society, and Expert editor for invited papers of the IEEE Communications Letters. He has served as a symposium co-chair for the IEEE International Conference on Communications 2022, and the IEEE Wireless Communications and Networking Conference 2022.
\end{IEEEbiography}

\ifCLASSOPTIONcaptionsoff
  \newpage
\fi

\end{document}